\newlength\figureheight 
\newlength\figurewidth 
\pgfplotsset{compat=newest}
\pgfplotsset{plot coordinates/math parser=false}
\newtheorem{specialcasecounter}{Theorem}
\newtheoremstyle{specialcasestyle}{1mm}{1mm}{\upshape}{}{\bfseries\upshape}{.}{0mm}{}
\theoremstyle{specialcasestyle}
\newtheorem{assump}{Assumption}
\newtheorem{lem}{Lemma}
\newtheorem{rem}{Remark}
\begin{document}
\title{A Fast Simulation Method for the Sum of Subexponential Distributions$^*$}

\author{Nadhir~Ben Rached, Fatma~Benkhelifa, Abla Kammoun, Mohamed-Slim~Alouini, and Raul Tempone\\\thanks{ \hspace{-3mm}The authors are members of the KAUST Strategic Research Initiative on Uncertainty Quantification in Science and Engineering (SRI-UQ).}
\small Computer, Electrical and Mathematical Science and Engineering (CEMSE) Division\\
\small King Abdullah University of Science and Technology (KAUST)\\
\small Thuwal, Makkah Province, Saudi Arabia \\
\small \{nadhir.benrached, fatma.benkhelifa, abla.kammoun, slim.alouini, raul.tempone\}@kaust.edu.sa
}
\date{}
\maketitle
\thispagestyle{empty}

\begin{abstract}
Estimating the probability that a sum of random variables (RVs) exceeds a given threshold is a well-known challenging problem. Closed-form expression of the sum distribution is usually intractable and presents an open problem. A crude Monte Carlo (MC)  simulation is the standard technique for the estimation of this type of probability. However, this approach is computationally expensive especially when dealing with rare events (i.e events with very small probabilities). Importance Sampling (IS) is an alternative approach which effectively improves the computational efficiency of the MC simulation. In this paper, we develop a general framework based on IS approach for the efficient estimation of the probability that the sum of independent and not necessarily identically distributed heavy-tailed RVs exceeds a given threshold. The proposed IS approach is based on constructing a new sampling distribution by twisting the hazard rate of the original underlying distribution of each component in the summation. A minmax approach is carried out for the determination of the twisting parameter, for any given threshold. Moreover, using this minmax optimal choice, the estimation of the probability of interest is shown to be asymptotically optimal as the threshold goes to infinity. We also offer some selected simulation results illustrating first the efficiency of the proposed IS approach compared to the naive MC simulation. The near-optimality of the minmax approach is then numerically analyzed.
\end{abstract}

\begin{IEEEkeywords}
Crude Monte Carlo, rare events, importance sampling, hazard rate, subexponential distributions, twisting parameter, asymptotically optimal.
\end{IEEEkeywords}

\section{Introduction}

{ \ The performance analysis of communication systems is generally associated with the investigation of the statistics of sums of Random Variables (RVs). For instance, when diversity techniques such as Maximum ratio Combining (MRC) and Equal Gain combining (EGC) are performed, the resulting received signal-to-noise-ratio (SNR) is modeled by a sum of fading variates \cite{alouini}. 
\par }
{\ Unfortunately, the statistics of the sum distribution for most of the challenging problems are generally intractable and unknown. Monte Carlo (MC) simulation is the standard technique to estimate the probability that a sum of RVs exceeds a given threshold. However, this approach requires an extensive computational work to estimate extremely small probabilities. Importance Sampling (IS) is an alternative approach which aims to improve the computational efficiency of the naive MC simulation technique \cite{rubino2009rare}. The basic idea behind IS technique is to change the underlying sampling distribution in a way to achieve a substantial variance reduction of the IS estimator. Many research efforts have been carried out to propose efficient IS algorithms. For instance, among the first works in the digital communication field, the authors in \cite{192731} and \cite{31142} proposed methods based respectively on scaling the variance and shifting the mean of the original probability measure. An extension of \cite{192731} was performed in \cite{52645} where a composite IS technique was derived. In \cite{212308}, the asymptotic efficiency of five different IS techniques was studied for the estimation of the Bit Error Rate (BER) in digital communication systems with Gaussian input. Exponential twisting, derived from the large deviation theory, is an interesting IS change of measure technique since in most of the cases it yields "optimal" asymptotic results \cite{54903} \cite{179349}. For instance, this technique was used to estimate the BER of direct-detection optical systems employing avalanche photodiode (APD) receivers in \cite{380042}. 
\par }
{\ The exponential twisting change of measure is feasible only with distributions having finite Moment Generating Function (MGF). Thus, in the heavy-tailed setting where the MGF is infinite, it is not possible to use the exponential twisting method. However, many heavy-tailed distributions, such as the Log-normal and the Weibull (with shape parameter less than 1) RVs, are frequently encountered in various applications. In cellular mobile communication systems, the Co-Channel Interference (CCI) power which arises due for instance to the neighboring cells that use the same frequency is generally modeled as a sum of Log-normal (SLN) RVs \cite{Stuber:2001:PMC:368633}. Besides, the Log-normal distribution is also used to model the large-scale fading in the ultra-wideband (UWB) communications \cite{Ghavami}, and the weak-to-moderate turbulence channels in free-space optical communication channels \cite{journals/twc/NavidpourUK07}. Recently, the Weibull fading has also received an increasing attention since it exhibits a good fit to experimental fading data for both indoor and outdoor environment\cite{1512431}, \cite{837048}, \cite{900150}. Various closed-form approximations of the sum of Log-normal RVs \cite{citeulike:6297231} \cite{citeulike:7151841} \cite{1275712} \cite{1369233} and the sum of Weibull RVs \cite{1665128} \cite{1388722} \cite{alouini1} have been extensively developed. These approximations are not generic and depend on the problem under consideration. Hence, a lot of research efforts have focused in developing a generic efficient IS technique dealing with distributions in the heavy-tailed class. In \cite{asmussen2000}, two efficient techniques for the estimation of the probability that the sum of subexponential RVs exceeds a given threshold have been presented. The first one is based on conditional MC, whereas the second is based on considering a new probability measure which is heavier than the underlying distribution. In \cite{TLR}, a transform likelihood ratio approach was derived to switch the heavy-tailed problem into an equivalent light-tailed one. The authors in \cite{Juneja:2002:SHT:566392.566394} have developed an efficient fast simulation method for estimating a sum of independent and identically distributed (i.i.d) RVs with subexponential decay. Their approach is based on twisting the hazard rate of the original probability measure of each component in the summation.
\par }
In this paper, inspired by \cite{Juneja:2002:SHT:566392.566394}, we develop a general approach based on hazard rate twisting to efficiently estimate the probability that a sum of independent and non-identically distributed heavy-tailed RVs exceeds a given threshold. The twisting parameter is determined through a minmax approach which first ensures a nearly optimal computational gain in terms of the number of simulation runs and second leads to an asymptotic optimality criterion. The rest of the paper is organized as follows. In section II, we state the problem setting and enumerate the main contributions. In Section III, a minmax hazard rate twisting approach is introduced with an emphasis on the general procedure leading to an efficient choice of the twisting parameter. Moreover, the asymptotic optimality criterion using this proposed IS approach is verified. In Section IV, two applications of distributions with subexponential decay are studied. In Section V, a substantial computational gain of the proposed IS technique is analyzed and shown through various selected simulation results.

\section{Mathematical Background}
\subsection{Problem Setting}
{\ Let $X_1,X_2,...,X_N$ be a sequence of independent but not necessarily identically distributed positive RVs. Let us denote the Probability Density Function (PDF) of each $X_i$ by $f_i(x)$, $i=1,2,...,N$. Our objective is to efficiently estimate 
\begin{align}
\alpha=\mathbb{P}\left (\sum_{i=1}^{N}{X_i}>\gamma_{th}\right )=P \left (S_N>\gamma_{th} \right),
\end{align}
for a sufficiently large threshold $\gamma_{th}$. We focus on heavy-tailed distributions, i.e distributions which exhibit slower decays than the exponential distribution. Formally, a distribution of a RV $X$ is said to be heavy-tailed if
\begin{align}
\lim_{x \rightarrow +\infty}{\exp \left(\nu x \right )\mathbb{P}\left(X>x\right)}=+\infty, \text{  for all  } \nu >0.
\end{align}
In practice, all commonly used heavy-tailed distributions belong to the subclass of subexponential distributions. In fact, a distribution of a RV $X$ is said to be subexponential if 
\begin{align}
\overline{F^{*n}}(x) \sim n\overline{F}(x) \hspace{2mm}\text{as} \hspace{2mm} x \rightarrow +\infty,
\end{align}
where $\overline{F}(x)$ is the Complementary Cumulative Distribution Function (CCDF) of $X$, and $\overline{F^{*n}}(x)$ is the CCDF of the sum of $n$ i.i.d RVs with distribution $F$. Examples of such subexponential distributions are: the Log-normal distribution, and the Weibull distribution with shape parameter less than $1$. The readers are referred to \cite{Juneja:2002:SHT:566392.566394} for more discussion about subexponential distributions.\par}
{\ The standard technique to estimate $\alpha$ is to use the naive MC estimator defined as 
\begin{align}
\hat \alpha_{MC}= \frac{1}{M}\sum_{j=1}^{M}{\textbf{1}_{\left (S_N(\omega_j)>\gamma_{th}\right )}},
\end{align}
where $M$ is the number of simulation runs, and $\textbf{1}_{(\cdot)}$ defines the indicator function. It is widely known that the naive MC simulation is extensively expensive for the estimation of rare events. In fact, from the Central Limit Theorem (CLT), it can be shown that the MC estimation with $10\%$ relative precision requires more than $100/\alpha$ simulation runs. Hence the number of samples to estimate a probability of order $10^{-9}$ should be more than $10^{11}$, with an accuracy requirement of $10\%$. Consequently, there is a detrimental need to improve the computational efficiency of the MC simulation.\par}
\subsection{Importance Sampling}
{\ IS is a variance reduction technique which aims to increase the computational efficiency of the naive MC simulation \cite{rubino2009rare}. The general concept of IS is to construct an unbiased estimator of the desired probability with much smaller variance than the naive estimator. In fact, this technique is based on performing a suitable change of the sampling distribution as follows
\begin{align}
\nonumber\alpha&=\int_{\mathbb{R}^N}{\textbf{1}_{(S_N>\gamma_{th})}f_1(x_1)f_2(x_2)...f_N(x_N)}\\
\nonumber &=\int_{\mathbb{R}^N}{\textbf{1}_{(S_N>\gamma_{th})}L\left(x_1,x_2,...,x_N\right)g_1(x_1)g_2(x_2)...g_N(x_N)}\\
&=\mathbb{E}_{p^*}\left[\textbf{1}_{(S_N>\gamma_{th})}L\left(X_1,X_2,...,X_N\right)\right],
\end{align}
where the expectation is taken with respect to the new probability measure $p^*$ under which the PDF of each $X_i$ is $g_i$, and $L$ is the likelihood ratio defined as
\begin{align}\label{lik}
L\left(X_1,X_2,...,X_N\right)=\prod_{i=1}^{N}{\frac{f_i(X_i)}{g_i(X_i)}}.
\end{align}
The idea behind this change of measure is to enhance sampling important points which have more impact on the desired probability. Hence, emphasizing that important points are sampled frequently will result in a decrease of the variance of the IS estimator. The new IS estimator is defined as
\begin{align}
\hat \alpha_{IS}=\frac{1}{M}\sum_{i=1}^{M}{\textbf{1}_{(S_N(\omega_i)>\gamma_{th})}L(X_1(\omega_i),...,X_N(\omega_i))}.
\end{align}\par}
{\ \\Generally, it is not obvious how to construct a new probability measure which results in decreasing the variance of the IS estimator and hence increasing the computational efficiency. Besides, it is necessary to define some performance metrics which measure the goodness and the pertinence of the IS estimator. Bounded relative error, asymptotic optimality, and bounded likelihood ratio are useful indicators to characterize a good change of probability measure \cite{rubino2009rare}. Generally, it is difficult to achieve the bounded relative error criterion, whereas the asymptotic optimality could be shown if one choose an appropriate probability measure $g_i$. Let us consider the sequence of the RVs $\{ T_{\gamma_{th}}\}$ defined as 
\begin{align}
T_{\gamma_{th}}=\textbf{1}_{(S_N>\gamma_{th})}L\left(X_1,...,X_N\right).
\end{align}
From the non-negativity of the variance of $T_{\gamma_{th}}$, we get
\begin{align}
\mathbb{E}_{p^*} \left [T_{\gamma_{th}}^2 \right ] \geq (\mathbb{P}(S_N>\gamma_{th}))^2.
\end{align}
Applying the logarithm on both side, we conclude that for all $p^*$ we have
\begin{align}
{\frac{\log\left(\mathbb{E}_{p^*}\left[T^2_{\gamma_{th}}\right]\right)}{\log\left(\mathbb{P}\left(S_N>\gamma_{th}\right)\right)}} \leq 2.
\end{align}
Hence, we say that $\alpha$ is asymptotically optimally estimated under the probability measure $p^*$ if the above equation holds with equality as $\gamma_{th} \rightarrow +\infty$, that is
\begin{align}\label{asymp_opt}
\lim_{\gamma_{th} \rightarrow \infty} {\frac{\log\left(\mathbb{E}_{p^*}\left[T^2_{\gamma_{th}}\right]\right)}{\log\left(\mathbb{P}\left(S_N>\gamma_{th}\right)\right)}}=2.
\end{align}
It is important to note that the naive simulation is not asymptotically optimal for the estimation of $\alpha$ since the ratio in (\ref{asymp_opt}) is equal to $1$.\par}
The exponential twisting technique, which is derived from the large deviation theory, is the main IS framework dealing with light-tailed distributions, that is distributions whose tails decay at an exponential rate or faster. The exponential twisting by an amount $\theta >0$ is given by 
\begin{align}\label{expo}
g_i\left ( x\right )\triangleq f_{i,\theta}(x)=\frac{f_i(x)\exp(\theta x)}{M_{X_i}(\theta)},
\end{align}
where $M_{X_i}(\theta)$ denotes the moment generating function (MGF) of the RV $X_i$. In most of the cases, this technique achieves the asymptotic optimality criterion given in (\ref{asymp_opt}) \cite{179349}.  

In the heavy-tailed setting, the exponential twisting change of measure is not feasible and alternative techniques are needed. In fact, the MGFs are infinite for distributions with heavy tails. In \cite{Juneja:2002:SHT:566392.566394}, an efficient IS technique was developed for the estimation of $\alpha$ in the case of i.i.d sum of RVs with subexponential decay. Their idea was based on twisting the hazard rate of each component in the summation $S_N$ by a quantity $0 <\theta <1$.
{\ Let us define the hazard rate $\lambda_i(\cdot)$ associated to the RV $X_i$ as
\begin{align}\label{hazard rate}
\lambda_i(x)=\frac{f_i(x)}{1-F_i(x)}, \hspace{2mm} x>0,
\end{align}
where $F_i(\cdot)$ is the CDF of $X_i$ , $i=1,...,N$. Besides, we define also the hazard function as
\begin{align}\label{hazard function}
\nonumber \Lambda_i(x)&=\int_{0}^{x}{\lambda_i(t)dt}\\
&=-\log \left (1-F_i(x) \right ), \hspace{2mm}x>0.
\end{align}
From (\ref{hazard rate}) and (\ref{hazard function}), the PDF of $X_i$ is related to the hazard rate and function as
\begin{align}
\nonumber f_i(x)&=\lambda_i(x)\exp\left(-\int_{0}^{x}{\lambda_i(t)dt}\right)\\
&=\lambda_i(x)\exp\left(-\Lambda_i(x)\right).
\end{align}
The change of probability measure is obtained by twisting the hazard rate of the underlying distribution by a quantity $0<\theta<1$ as follows
\begin{align}\label{twisted pdf}
\nonumber g_i(x)& \triangleq f_{i,\theta}(x)=\left(1-\theta\right)\lambda_i(x)\exp\left(-\left(1-\theta\right)\Lambda_i(x)\right)\\
&=\left(1-\theta\right)f_i(x)\exp\left(\theta\Lambda_i\left(x\right)\right).
\end{align}
Consequently, the RV $T_{\gamma_{th}}$ has the following expression
\begin{align}\label{tgamma}
T_{\gamma_{th}}=\frac{1}{\left(1-\theta\right)^N}\exp\left(-\theta\sum_{i=1}^{N}{\Lambda_i(X_i)}\right)\textbf{1}_{(S_N>\gamma_{th})}.
\end{align}
\par}
{\ For heavy-tailed distributions, the hazard rate twisting based approach plays the same role as the exponential twisting technique in the light-tailed setting. In \cite{1371507}, the authors emphasize the central role played by hazard rate technique for the estimation of small probabilities that a general function containing both light and heavy-tailed distributions exceeds a given threshold. In fact, by developing log-asymptotic expressions for both the probability of interest and the second moment of $T_{\gamma_{th}}$, they have proved that $\alpha$ is asymptotically  optimally estimated. The equivalence between the hazard rate and the exponential twisting techniques is also emphasized in \cite{1261434} where a suitable hazard function transformation is used, in the case of a sum of i.i.d subexponential distributions, to switch from a heavy-tailed problem to a light-tailed one where the exponential twisting could be used.
\par}
\subsection{Main Contributions}
{\ A primordial question when using either exponential or hazard rate twisting techniques is the choice of the twisting parameter $\theta$. The selection of this parameter should be performed in a way to ensure a maximum reduction of the second moment of $T_{\gamma_{th}}$, and hence result in a maximum amount of computational gain. Unfortunately, this is not feasible in general since $\mathbb{E}_{\theta}\left [ T_{\gamma_{th}}^2 \right ]$ ( $\mathbb{E}_{\theta} \left [ \cdot\right ]$ denotes the expectation under the IS probability measure ) is typically not known in a closed form. The commonly used procedure to determine $\theta$ starts by deriving a close upper bound on $\mathbb{E}_{\theta}\left [ T_{\gamma_{th}}^2 \right ]$ and then finding the value of $\theta$ which minimizes that upper bound. For the exponential twisting, this upper bound  is easily obtained using (\ref{expo}) and (\ref{lik})
\begin{align}
\nonumber \mathbb{E}_{\theta} \left [L^2 \textbf{1}_{(S_N>\gamma_{th})} \right ]&=\mathbb{E}_{\theta}\left [M_{S_N}^2(\theta)\exp \left(-2\theta S_N \right )\textbf{1}_{(S_N>\gamma_{th})}\right]\\
& \leq M_{S_N}^2(\theta)\exp \left (-2\theta \gamma_{th} \right).
\end{align}
Then, the value of $\theta=\theta^*$ selected to minimize the upper bound is satisfying
\begin{align}
\frac{M_{S_N}^{'}(\theta^*)}{M_{S_N}(\theta^*)}=\gamma_{th}.
\end{align}
\par}
{\ In the hazard rate twisting setting, the determination of $\theta^*$ is not as straightforward as for the exponential twisting approach. In fact, the upper bound on the second moment is not easy to obtain. In \cite{Juneja:2002:SHT:566392.566394}, the i.i.d sum of subexponential distributions is considered. The determination of the twisting parameter was done via the derivation of an upper bound on the second moment of $T_{\gamma_{th}}$ which holds only for a sufficiently large threshold. More precisely, by assuming that the hazard rates are eventually decreasing to zero and are eventually everywhere differentiable, the asymptotic inequality
\begin{align}\label{asymp_iid}
\sum_{i=1}^{N}{\Lambda(x_i)} \geq \Lambda \left (\sum_{i=1}^{N}{x_i} \right )-\epsilon ,
\end{align}
holds for every $\epsilon >0$ and with $\sum_{i}^{N}{x_i}$ large enough. Then, using the previous asymptotic inequality, an upper bound on $\mathbb{E}_{\theta} \left [ T_{\gamma_{th}}^2 \right ]$ was computed which is minimized when 
\begin{align}\label{thetaiid}
\theta=1-\frac{N}{\Lambda(\gamma_{th})}.
\end{align}
Moreover, they proved in \cite{Juneja:2002:SHT:566392.566394} that asymptotic optimality holds by replacing $N$ in (\ref{thetaiid}) by any positive constant. In the present work, we consider a non-trivial generalization of \cite{Juneja:2002:SHT:566392.566394} to the case of the sum of independent and non-identically distributed subexponential RVs. Our procedure for the determination of the twisting parameter is performed in two steps. First, we derive an upper bound on the second moment of $T_{\gamma_{th}}$ through the resolution of a constrained maximization problem on the likelihood ratio. Second, we minimize this upper bound over all possible value of $\theta$ which results in the so called minmax optimal twisting parameter $\theta=\theta^*$. For the class of subexponential distributions, we will see that, under a weaker assumption than the one stated in \cite{Juneja:2002:SHT:566392.566394} to derive (\ref{asymp_iid}), we are able to characterize the behavior of the solution of the maximization problem and detect the region where the maximum is achieved.
\par}
{\ In a nutshell, the main contributions of the present paper are:
\begin{itemize}
\item We develop an optimized hazard rate twisting approach for the estimation of $\alpha$ for the case of the sum of independent and non-identically distributed subexponential RVs. The procedure that we will follow to determine $\theta$ is based on a minmax approach. This minmax procedure starts by computing the maximum (the most sharpest upper bound) on the second moment of $T_{\gamma_{th}}$ for all value of $\gamma_{th}$. Then, a simple minimization problem is solved to derive the minmax optimal twisting parameter $\theta^*$. Besides, we will see also that this choice of $\theta$ is  efficient since it almost results in the same computational gain as the unknown optimal value (the value that minimizes the actual second moment of $T_{\gamma_{th}}$). In the particular i.i.d sum, we prove that our minmax twisting parameter is equivalent to the one derived in \cite{Juneja:2002:SHT:566392.566394} as $\gamma_{th}$ goes to infinity.
\item We prove under some realistic assumptions, which are generally satisfied by distributions with subexponential decays, that $\alpha$ is asymptotically optimally estimated using our minmax approach. 
\item Finally, two applications will be studied to clarify how the procedure is applied, and to validate through numerical results the efficiency of the proposed minmax hazard rate twisting approach. The first application considers the sum of independent Log-normal RVs, and the second one deals with the sum of independent Weibull distributions with shape parameter less than $1$. It is important to note that in our approach there is no restriction to consider the sum of a mixture of subexponential distributions belonging to different families.
\end{itemize}
\par}
\section{Proposed Hazard Rate Twisting}
\subsection{General Approach}
{\ Generally, an interesting IS change of probability measure for the estimation of rare events is characterized by the property of uniformly bounded likelihood ratio. This property will result in obtaining an upper bound on the second moment of the RV $T_{\gamma_{th}}$. Then, the optimal value of the parameter $\theta$ is chosen to minimize that upper bound.
More precisely, the procedure of choosing $\theta$ is divided into two steps. In the first step, we construct an upper bound on the second moment of $E_{\theta}(T_{\gamma_{th}}^2)$ which is achieved by solving the following maximization problem (P):
\begin{align}
\nonumber (P): \underset{X_1,...,X_N}{\max} \hspace{3mm}&L(X_1,X_2,...,X_N)\\
\text{Subject to   } &\sum_{i=1}^{N}{X_i} \geq\gamma_{th},\\
\nonumber &X_i>0, \hspace{2mm} i=1,...,N,
\end{align}
where the likelihood ratio is given as follows
\begin{align}
L(X_1,X_2,...,X_N)=\frac{1}{\left(1-\theta\right)^N}\exp\left(-\theta\sum_{i=1}^{N}{\Lambda_i(X_i)}\right).
\end{align}
Hence, solving the problem $(P)$ is equivalent to solving the following minimization problem $(P')$:
\begin{align}
\nonumber (P'): \underset{X_1,...,X_N}{\min} \hspace{3mm}& \sum_{i=1}^{N}{\Lambda_i(X_i)}\\
\text{Subject to   } &\sum_{i=1}^{N}{X_i} \geq\gamma_{th},\\
\nonumber &X_i>0, \hspace{2mm} i=1,...,N.
\end{align}
The resolution of the maximization problem $(P)$ or equivalently the minimization problem $(P')$ will be discussed later in the following subsection. 
\par }
{\ By denoting the optimal solution of $(P)$ by $X_1^*,X_2^*,...,X_N^*$, we have 
\begin{align}\label{bound}
\nonumber \mathbb{E}_{\theta}\left[T_{\gamma_{th}}^2\right]&=\mathbb{E}_{\theta}\left[L^2\left(X_1,X_2,...,X_N\right)\textbf{1}_{(S_N>\gamma_{th})}\right]\\
&\leq \frac{1}{\left(1-\theta\right)^{2N}}\exp\left(-2\theta\sum_{i=1}^{N}{\Lambda_i(X_i^*)}\right).
\end{align}
The second step is to minimize (\ref{bound}) to get the optimal twisting parameter $\theta^*$. This is a simple minimization problem to solve which results in
\begin{align}\label{theta}
\theta^*&=1-\frac{N}{\sum_{i=1}^{N}{\Lambda_i(X_i^*)}}.
\end{align} 
\par }
\subsection{Asymptotic Optimality Criterion}
{\ The value of the twisting given in (\ref{theta}) represents the minmax optimal choice among all values of $\theta$, and for all threshold values. Now, we focus on the asymptotic behavior of the IS estimator as $\gamma_{th}$ goes to infinity. In particular, we investigate the asymptotic optimality criterion (\ref{asymp_opt}) using the twisting parameter $\theta^*$ given in (\ref{theta}). 
\par }
{\ The investigation of the asymptotic optimality criterion is based on analyzing the asymptotic behavior of the solution of the minimization problem $(P')$. Since each hazard function $\Lambda_i(\cdot)$ is an increasing function, it follows that the inequality constraint is satisfied with equality, that is
\begin{align}\label{cons}
\sum_{i=1}^{N}{X_i^*}=\gamma_{th}.
\end{align} 
In order to ensure the asymptotic optimality, let us consider the following assumption
\begin{assump}
\hspace{2mm} For each $i \in \{1,2,...,N\}$, we assume that there exist $\eta_i$ such that the hazard function $\Lambda_i(\cdot)$ is concave in the interval $[\eta_i, +\infty )$.
\end{assump} 
The previous assumption is consistent with all commonly used subexponential distributions such as the Log-normal, the Weibull (with shape parameter less than 1), and the Pareto (with parameter strictly bigger than 1) distributions. In the following lemma, we characterize the behavior of the solution of $(P')$ for a sufficiently large threshold $\gamma_{th}$:
\begin{lem} \hspace{2mm}Under Assumption 1, there exists a fixed index $i_0 \in \{ 1,2,...,N \}$ such that the minimizers of $(P')$ satisfy for a sufficiently large $\gamma_{th}$
\begin{align}
\gamma_{th}- &\sum_{i\neq i_0}{\eta_i} \leq X_{i_0}^*\leq \gamma_{th},\\
X_i &\leq \eta_i, \text{  for all  } i \neq i_0,
\end{align}
and hence as $\gamma_{th}\rightarrow +\infty$, we have  
\begin{align}
X_{i_0}^* &\underset{+\infty}{\sim} \gamma_{th},\text{  as  } \gamma_{th} \rightarrow \infty,\\
X_i^*&=\mathcal{O}(1), \text{  for all  }i\neq i_0.
\end{align}
\end{lem}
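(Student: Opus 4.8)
The plan is to treat $(P')$ purely as a constrained optimization problem; only Assumption~1 and the divergence $\gamma_{th}\to\infty$ will be used, not the subexponential structure of the $f_i$ themselves. First I would record the geometry of the feasible region. Because every hazard function $\Lambda_i=-\log(1-F_i)$ is nondecreasing, the inequality constraint is active, so $\sum_{i=1}^{N}X_i^*=\gamma_{th}$ as already noted in~(\ref{cons}); the minimizers thus live on the simplex $\{X_i>0,\ \sum_i X_i=\gamma_{th}\}$. Since $\sum_i X_i^*=\gamma_{th}$ cannot hold with $X_i^*\le\eta_i$ for every $i$ once $\gamma_{th}>\sum_i\eta_i$, at least one coordinate must be ``large''. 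The substance of the lemma is then the complementary assertion: for $\gamma_{th}$ large, \emph{at most one} coordinate of a minimizer can exceed its threshold $\eta_i$.

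The core of the argument is a local exchange. Suppose a minimizer $X^*$ had two distinct coordinates above their thresholds, $X_i^*>\eta_i$ and $X_j^*>\eta_j$. For small $t>0$ shift $t$ units of mass between $i$ and $j$; the sum, hence feasibility, is unchanged, and both coordinates stay in the concave region. By the supporting-line inequality for a function concave on $[\eta_k,\infty)$, namely $\Lambda_k(y)\le\Lambda_k(x)+\lambda_k(x)(y-x)$, the resulting increment of the objective $\sum_k\Lambda_k(X_k)$ is at most $[\lambda_i(X_i^*)-\lambda_j(X_j^*)]\,t$ if the mass moves from $j$ to $i$, and at most $[\lambda_j(X_j^*)-\lambda_i(X_i^*)]\,t$ if it moves the other way; one of the two bounds is nonpositive, so there is a feasible direction along which the objective does not increase. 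Because $\lambda_i$ decreases as $X_i$ grows and $\lambda_j$ increases as $X_j$ shrinks (Assumption~1), this sign persists along the whole path, so one may push the ``donor'' coordinate all the way down to its threshold; the resulting point is still a minimizer and has one fewer coordinate above threshold, and iterating leaves at most one. When the hazard rates $\lambda_k=\Lambda_k'$ are moreover eventually strictly decreasing --- as they are for the Log-normal, the Weibull with shape parameter less than $1$, and the Pareto with parameter strictly bigger than $1$ --- the supporting-line bound is strict for $t\neq0$, so the perturbation strictly decreases the objective and contradicts minimality; then \emph{every} minimizer has at most one large coordinate (and, by the same first-order reasoning, the minimizer is unique for $\gamma_{th}$ large). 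Combined with the first paragraph, exactly one coordinate, call it $X_{i_0}^*$, exceeds its threshold.

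The stated bounds are now immediate. For every $i\neq i_0$ we have $X_i^*\le\eta_i$, so, using $\sum_i X_i^*=\gamma_{th}$ and $X_i^*>0$,
\[
\gamma_{th}-\sum_{i\neq i_0}\eta_i \le X_{i_0}^* = \gamma_{th}-\sum_{i\neq i_0}X_i^* \le \gamma_{th}.
\]
Dividing by $\gamma_{th}$ and letting $\gamma_{th}\to\infty$ forces $X_{i_0}^*\sim\gamma_{th}$, while $0<X_i^*\le\eta_i=\mathcal{O}(1)$ for $i\neq i_0$. To see that a single index $i_0$ works for all large $\gamma_{th}$, note that the optimal value is $\Lambda_{i_0}(\gamma_{th}+\mathcal{O}(1))+\sum_{k\neq i_0}\Lambda_k(X_k^*)$ with the last sum bounded by the constant $\sum_k\Lambda_k(\eta_k)$, so $i_0$ is --- up to these bounded corrections --- the index minimizing $\Lambda_{i_0}$ near $\gamma_{th}$; under the mild hypothesis that the hazard functions are eventually ordered, which holds in all the examples, this index is constant once $\gamma_{th}$ is large.

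The step I expect to require the most care is the bookkeeping around this exchange argument: checking that the perturbation stays feasible and inside the concave regions, handling possible non-uniqueness of the minimizer (hence the passage to a modified minimizer in the non-strictly-concave case), and ruling out oscillation of the index $i_0$ with $\gamma_{th}$. None of this goes beyond elementary convex analysis once Assumption~1 is in force; in particular the subexponential property of the $f_i$ plays no role in the lemma itself --- it enters only the subsequent argument, which uses this lemma to verify the asymptotic-optimality criterion~(\ref{asymp_opt}).
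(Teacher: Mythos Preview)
Your argument is correct but proceeds by a different route than the paper. Rather than a local mass-exchange, the paper argues globally: on the sub-simplex $\tilde S(N,\gamma_{th})=\{X\in S(N,\gamma_{th}):X_i\ge\eta_i\ \forall i\}$ the objective $\sum_i\Lambda_i$ is concave by Assumption~1, so its minimum over $\tilde S$ is attained at an extreme point $e_k=(\eta_1,\dots,\eta_{k-1},\gamma_{th}-\sum_{j\ne k}\eta_j,\eta_{k+1},\dots,\eta_N)$; hence any minimizer over the full simplex either lies in $S\setminus\tilde S$ or coincides with some $e_k$, and in both cases at least one coordinate satisfies $X_i^*\le\eta_i$. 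Fixing that coordinate reduces $(P')$ to an $(N{-}1)$-variable problem of the same form with threshold $\gamma_{th}-X_i^*\ge\gamma_{th}-\eta_i$, and iterating $N{-}1$ times isolates the single large coordinate. Your perturbation route is more elementary and explains directly, via the monotonicity of the hazard rates along the exchange path, why two coordinates in their concave regions cannot coexist at a minimum; it also yields for free the strict-concavity refinement that \emph{every} minimizer (not merely some minimizer) has the claimed shape, whereas the extreme-point argument only exhibits one such minimizer. The paper's route, by contrast, rests on a single classical fact---a concave function on a polytope attains its minimum at a vertex---and packages the induction cleanly, which would adapt more readily to variants with additional linear constraints. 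On the fixity of $i_0$ for all large $\gamma_{th}$, you are in fact more careful than the paper, which leaves that point implicit.
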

\begin{proof}
Let us consider $S(N,\gamma_{th})$ the set of all feasible solutions: 
\begin{align}
S(N,\gamma_{th})=\{   X=(X_1,X_2,...,X_N) \in (\mathbb{R}^+)^N, \sum_{i=1}^{N}{X_i}=\gamma_{th} \}.
\end{align}
Through the use of Assumption 1, the objective function of $(P')$ is concave on the subset:
\begin{align}
\nonumber \tilde{S}(N,\gamma_{th})&=\{ X=(X_1,X_2,...,X_N) \in (\mathbb{R}^+)^N, \sum_{i=1}^{N}{X_i}=\gamma_{th},\\
&X_i  \geq\eta_i, \text{ for each }i\in \{1,2,...,N \} \}.
\end{align}
Thus, the minimum of the objective function of $(P')$ over $\tilde{S}(N,\gamma_{th})$ is achieved in at least one of its extreme points. More precisely, the extreme points of $\tilde{S}(N,\gamma_{th})$ are $e_1,e_2,...,e_N$ such that $e_i=(\eta_1,\eta_2,...,\eta_{i-1},\gamma_{th}-\sum_{j\neq i}{\eta_j},\eta_{i+1},...,\eta_N)$. Therefore the minimum of $(P')$ over $S(N,\gamma_{th})$ is either achieved in one of the extreme point $e_i$, $i=1,2,...,N$, or on the set
\begin{align}
\nonumber &\bar {S}(N,\gamma_{th})=S(N,\gamma_{th})\backslash \tilde{S}(N,\gamma_{th})\\
\nonumber &=\{X=(X_1,X_2,...,X_N)\in (\mathbb{R}^+)^N,\sum_{i=1}^{N}{X_i}=\gamma_{th},\\
 &\exists i \text{ such that } X_i <  \eta_i  \}.
\end{align}
In both cases, there exists at least one index $i \in \{ 1,2...,N\}$ such that $X_i^*\leq \eta_i$. In addition, in order to satisfy the equality constraint $\sum_{i=1}^{N}{X_i^*}=\gamma_{th}$ for a sufficiently large $\gamma_{th}$, there should exist an index $j \in \{1,2,...,N \}$ such that $X_j \geq \eta_j$. In order to prove the result in Lemma 1, we proceed iteratively by dimension reduction. In fact, without loss of generality, we assume that $X_N^*\leq \eta_N$ (through an index permutation). It follows that
\begin{align}
\min_{S(N,\gamma_{th})}{\sum_{i=1}^{N}{\Lambda_i(X_i)}}=\min_{X_{N} \leq \eta_N} \min_{S(N-1,\gamma_{th,N-1})}{\sum_{i=1}^N{\Lambda_i(X_i)}},
\end{align}
where $\gamma_{th,N-1}=\gamma_{th}-X_{N}$, it follows that
\begin{align}
\min_{S(N,\gamma_{th})}{\sum_{i=1}^{N}{\Lambda_i(X_i)}}=\Lambda_N(X_N^*)+\min_{S(N-1,\gamma_{th,N-1}^*)}{\sum_{i=1}^{N-1}{\Lambda_i(X_i)}},
\end{align}
Consequently, we can see that we have reduced the number of optimization variables to be $N-1$, while we have kept the same structure of the minimization problem $(P')$ with $\gamma_{th,N-1}^*=\gamma_{th}-X_{N}^*$. Hence the previous procedure could be repeated again. In fact, using the same argument as before, there exists another index $i\in\{1,2,...,N-1 \}$ such that $X_i^* \leq  \eta_i$. Without loss of generality, we assume that $i=N-1$ which leads to
\begin{align}
\nonumber \min_{S(N,\gamma_{th})}{\sum_{i=1}^{N}{\Lambda_i(X_i)}}&=\Lambda_N(X_N^*)+\Lambda_{N-1}(X_{N-1}^*)\\
&+\min_{S(N-2,\gamma_{th,N-2}^*)}{\sum_{i=1}^{N-2}{\Lambda_i(X_i)}},
\end{align}
where $\gamma_{th,N-2}^*=\gamma_{th}-X_{N}^*-X_{N-1}^*$. After $N-2$ steps, we get
\begin{align}
\nonumber \min_{S(N,\gamma_{th})}{\sum_{i=1}^{N}{\Lambda_i(X_i)}}&=\sum_{i=1}^{N-2}{\Lambda_{N+1-i}(X_{N+1-i}^*)}\\
&+\min_{S(2,\gamma_{th,2}^*)}{\sum_{i=1}^{2}{\Lambda_i(X_i)}},
\end{align}
with $X_i^* \leq \eta_i$, for $i=3,4....,N$, and $\gamma_{th,2}=\gamma_{th}-\sum_{i=3}^{N}{X_i^*}$. Thus, we end up with a two dimensional minimization problem. Again, there should exist an index $i=2$ ( through a possible permutation ) such that $X_2^* \leq \eta_2$. Therefore, using the equality constraint $\sum_{i=1}^N{X_i^*}=\gamma_{th}$, we get
\begin{align}
X_i^*&\leq \eta_i, \text{  }i=2,3,...,N,\\
\gamma_{th,2}^*&-\eta_2\leq X_1^* \leq \gamma_{th,2}^*. 
\end{align}
Since $\eta_i$, $i=2,3,...,N$ are independent of $\gamma_{th}$, it follows
\begin{align}
\gamma_{th}-\sum_{i=2}^{N}{\eta_i} \leq X_1^*\leq \gamma_{th}.
\end{align}
Thus, as $\gamma_{th}$ goes to infinity, we have
\begin{align}
X_1^* &\underset{+\infty}{\sim} \gamma_{th}\\
X_i^*&=\mathcal{O}(1), \text{ }\forall i \in \{2,3,...,N  \}.
\end{align} 
\end{proof}
It is important to note that in the particular i.i.d case, the index $i_0$ could be any index in $\{1,2,...,N \}$, and the minimum is achieved in $N$ different points. A direct consequence of Lemma 1 is presented in the following lemma.
\begin{lem} \hspace{2mm} Under Assumption 1, the objective function of $(P')$ has the following asymptotic behavior
\begin{align}\label{consequence}
\sum_{i=1}^{N}{\Lambda_i(X_i^*)} \underset{+\infty}{\sim} \Lambda_{i_0}(\gamma_{th}), \text{  as  }\gamma_{th} \rightarrow +\infty.
\end{align}
\end{lem}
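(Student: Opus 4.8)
The plan is to leverage Lemma~1 directly: it already localizes all the mass of the minimizer onto a single coordinate $X_{i_0}^*$, which grows like $\gamma_{th}$, while every other $X_i^*$ stays bounded by the fixed constant $\eta_i$. So the natural first step is to split the sum $\sum_{i=1}^N \Lambda_i(X_i^*)$ into the dominant term $\Lambda_{i_0}(X_{i_0}^*)$ and the remainder $\sum_{i \neq i_0} \Lambda_i(X_i^*)$, and argue that the remainder is $\mathcal{O}(1)$ while the dominant term is $\sim \Lambda_{i_0}(\gamma_{th})$; this would give the claim once we know $\Lambda_{i_0}(\gamma_{th}) \to +\infty$, which holds because each $X_i$ is positive so the CCDF $1-F_{i_0}(x) \to 0$ and hence $\Lambda_{i_0}(x) = -\log(1-F_{i_0}(x)) \to +\infty$.

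For the remainder, Lemma~1 tells us $X_i^* \leq \eta_i$ for $i \neq i_0$, and since each $\Lambda_i$ is nondecreasing, $0 \leq \Lambda_i(X_i^*) \leq \Lambda_i(\eta_i)$, a fixed finite constant independent of $\gamma_{th}$. Hence $\sum_{i \neq i_0} \Lambda_i(X_i^*) = \mathcal{O}(1)$. For the dominant term, Lemma~1 gives $\gamma_{th} - \sum_{i \neq i_0}\eta_i \leq X_{i_0}^* \leq \gamma_{th}$, so by monotonicity $\Lambda_{i_0}\!\left(\gamma_{th} - \sum_{i\neq i_0}\eta_i\right) \leq \Lambda_{i_0}(X_{i_0}^*) \leq \Lambda_{i_0}(\gamma_{th})$. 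The task is then to show the lower end is asymptotically equivalent to $\Lambda_{i_0}(\gamma_{th})$, i.e. $\Lambda_{i_0}(\gamma_{th} - c)/\Lambda_{i_0}(\gamma_{th}) \to 1$ for the fixed constant $c = \sum_{i \neq i_0}\eta_i$.

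I expect this last point — the "shift invariance" of $\Lambda_{i_0}$ up to asymptotic equivalence — to be the only real obstacle, and it is exactly where Assumption~1 (concavity of $\Lambda_{i_0}$ on $[\eta_{i_0},+\infty)$) enters. For a concave, nondecreasing, unbounded function $\Lambda$, one has $\Lambda(x) - \Lambda(x-c) \leq c\,\Lambda'(x-c)$ (or, without assuming differentiability, $c$ times a subgradient), and concavity forces $\Lambda'(x) \to 0$ as $x \to \infty$ — otherwise $\Lambda$ would grow at least linearly, contradicting the heavy-tailedness/subexponential decay that makes $\Lambda$ sublinear. Therefore $\Lambda_{i_0}(\gamma_{th}) - \Lambda_{i_0}(\gamma_{th}-c) = o(1) = o(\Lambda_{i_0}(\gamma_{th}))$, which upgrades the sandwich above to $\Lambda_{i_0}(X_{i_0}^*) \sim \Lambda_{i_0}(\gamma_{th})$. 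Combining the two pieces,
\begin{align}
\sum_{i=1}^{N}\Lambda_i(X_i^*) = \Lambda_{i_0}(X_{i_0}^*) + \mathcal{O}(1) \underset{+\infty}{\sim} \Lambda_{i_0}(\gamma_{th}),
\end{align}
since the $\mathcal{O}(1)$ term is negligible against $\Lambda_{i_0}(\gamma_{th}) \to +\infty$. This completes the argument; the only care needed is to handle the possibly non-differentiable case via one-sided derivatives or concavity inequalities rather than a clean derivative bound.
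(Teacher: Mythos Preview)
Your overall structure matches the paper's proof exactly: split off the bounded remainder $\sum_{i\neq i_0}\Lambda_i(X_i^*)=\mathcal{O}(1)$ using $X_i^*\leq\eta_i$ and monotonicity, then show $\Lambda_{i_0}(X_{i_0}^*)\sim\Lambda_{i_0}(\gamma_{th})$ via concavity of $\Lambda_{i_0}$ on $[\eta_{i_0},+\infty)$.

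The one place you diverge is in how you control $\Lambda_{i_0}(\gamma_{th})-\Lambda_{i_0}(X_{i_0}^*)$. You argue that concavity plus heavy-tailedness forces $\lambda_{i_0}(x)\to 0$, giving an $o(1)$ difference. The paper instead observes that concavity alone makes $\lambda_{i_0}$ nonincreasing on $[\eta_{i_0},+\infty)$, hence bounded by $\lambda_{i_0}(\eta_{i_0})$, so $\Lambda_{i_0}$ is Lipschitz there and the difference is merely $\mathcal{O}(1)$ --- which is already $o(\Lambda_{i_0}(\gamma_{th}))$ since $\Lambda_{i_0}(\gamma_{th})\to+\infty$. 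Your route is correct in the paper's subexponential setting, but it imports the extra hypothesis $\lambda_{i_0}\to 0$ that the paper explicitly avoids (cf.\ Remark~1, where the authors note their Assumption~1 is weaker than \cite{Juneja:2002:SHT:566392.566394} precisely because it does not require the hazard rate to vanish). The Lipschitz argument is both shorter and stays strictly within Assumption~1.
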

\begin{proof}
Using Lemma 1 and the fact that $\Lambda_{i_0}(\gamma_{th})$ tends to infinity as $\gamma_{th}$ increases, we have 
\begin{align}
\frac{\Lambda_i(X_i^*)}{\Lambda_{i_0}(\gamma_{th})} \rightarrow 0 \text{  as  }\gamma_{th} \rightarrow +\infty , \text{ for all }i\neq i_0.
\end{align}
The remaining work is to prove that 
\begin{align}\label{eq2}
\frac{\Lambda_{i_0}(X_{i_0}^*)}{\Lambda_{i_0}(\gamma_{th})} \underset{+\infty}{\sim} 1, \text{  as  }\gamma_{th} \rightarrow +\infty.
\end{align}
Using the fact that $\Lambda_{i_0}(\cdot)$ is increasing to infinity and concave for inputs bigger than $\eta_{i_0}$, then its derivative which is the hazard rate $\lambda_{i_0}(\cdot)$ is a decreasing function provided that $x \geq \eta_{i_0}$. Hence, $\lambda_{i_0}(\cdot)$ is bounded by $\lambda_{i_0}(\eta_{i_0})$ for all $x \geq \eta_{i_0}$. Consequently, $\Lambda_{i_0}(\cdot)$ is Lipschitz in the interval $[\eta_{i_0},+\infty)$ and we have
\begin{align}
\Lambda_{i_0}(\gamma_{th})-\Lambda_{i_0}(X_{i_0}^*)=\mathcal{O}(\gamma_{th}-X_{i_0}^*), \text{  as  }\gamma_{th} \rightarrow +\infty.
\end{align}
Using Lemma 1, we have that $\gamma_{th}-X_{i_0}^*=\mathcal{O}(1)$. Thus, it follows that
\begin{align}
\Lambda_{i_0}(\gamma_{th})-\Lambda_{i_0}(X_{i_0}^*)=o(\Lambda_{i_0}(\gamma_{th})),
\end{align}
which leads to (\ref{eq2}) and then the proof is concluded.
\end{proof}
Now, we could state the asymptotic optimality theorem
\begin{specialcasecounter}
For a sum of independent RVs with subexponential distributions and under Assumption 1, the quantity of interest $\alpha$ is asymptotically optimally estimated using the hazard rate twisting approach with the minmax optimal twisting parameter $\theta^*$ given in (\ref{theta}).
\end{specialcasecounter}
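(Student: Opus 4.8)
The plan is to verify the asymptotic optimality criterion \eqref{asymp_opt} directly, by obtaining matching logarithmic asymptotics for the numerator and denominator as $\gamma_{th}\to\infty$. The denominator is $\log\mathbb{P}(S_N>\gamma_{th})$; since the $X_i$ are subexponential, $\mathbb{P}(S_N>\gamma_{th})\sim\sum_{i=1}^N\overline{F_i}(\gamma_{th})$, and this is dominated (on the log scale) by the heaviest tail, which is exactly the component $i_0$ identified in Lemma~1. Hence $\log\mathbb{P}(S_N>\gamma_{th})\sim-\Lambda_{i_0}(\gamma_{th})$, using $\overline{F_{i_0}}(x)=\exp(-\Lambda_{i_0}(x))$. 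For the numerator, I would start from the upper bound \eqref{bound} on $\mathbb{E}_{\theta}[T_{\gamma_{th}}^2]$ specialized at $\theta=\theta^*$ from \eqref{theta}: substituting $\theta^*=1-N/\sum_i\Lambda_i(X_i^*)$ gives
\begin{align}
\mathbb{E}_{\theta^*}\left[T_{\gamma_{th}}^2\right]\leq\left(\frac{\sum_{i=1}^N\Lambda_i(X_i^*)}{N}\right)^{\!2N}\exp\!\left(-2\left(\sum_{i=1}^N\Lambda_i(X_i^*)-N\right)\right).
\end{align}
Taking logarithms, the right-hand side is $2N\log\!\big(\sum_i\Lambda_i(X_i^*)/N\big)-2\sum_i\Lambda_i(X_i^*)+2N$, which by Lemma~2 is $\sim-2\Lambda_{i_0}(\gamma_{th})$ as $\gamma_{th}\to\infty$ (the log term and the additive constant are negligible compared with $\Lambda_{i_0}(\gamma_{th})\to\infty$).

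Combining these two estimates gives
\begin{align}
\limsup_{\gamma_{th}\to\infty}\frac{\log\mathbb{E}_{\theta^*}[T_{\gamma_{th}}^2]}{\log\mathbb{P}(S_N>\gamma_{th})}\leq\limsup_{\gamma_{th}\to\infty}\frac{-2\Lambda_{i_0}(\gamma_{th})(1+o(1))}{-\Lambda_{i_0}(\gamma_{th})(1+o(1))}=2.
\end{align}
The reverse inequality $\liminf\geq2$ is free: it is exactly the general lower bound derived in the text just before \eqref{asymp_opt}, valid for any change of measure $p^*$ by non-negativity of the variance of $T_{\gamma_{th}}$. Together these force the limit to equal $2$, which is \eqref{asymp_opt}, establishing asymptotic optimality.

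The main obstacle I anticipate is making the denominator estimate $\log\mathbb{P}(S_N>\gamma_{th})\sim-\Lambda_{i_0}(\gamma_{th})$ fully rigorous: one must argue that among the $N$ non-identically distributed subexponential components the index achieving the heaviest tail coincides with the $i_0$ of Lemma~1 (or at least that both indices give the same logarithmic growth rate of the hazard function), and that the subexponential closure property $\overline{F^{*N}}\sim N\overline{F}$ in the i.i.d.\ definition extends to the independent non-identically distributed sum so that $\mathbb{P}(S_N>\gamma_{th})$ is of the order of the maximal tail. This is a standard fact for the class of subexponential (indeed long-tailed) distributions, and I would cite it from the references \cite{asmussen2000, Juneja:2002:SHT:566392.566394}; alternatively, a crisp two-sided bound suffices, namely $\overline{F_{i_0}}(\gamma_{th})\leq\mathbb{P}(S_N>\gamma_{th})\leq\sum_{i=1}^N\overline{F_i}(\gamma_{th})\leq N\,\overline{F_{i_0}}(\gamma_{th}\,)$ where $\overline{F_{i_0}}$ denotes the heaviest tail, and taking logarithms already pins down the rate as $-\Lambda_{i_0}(\gamma_{th})(1+o(1))$ without invoking the full convolution asymptotics. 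A secondary technical point is checking that Assumption~1 guarantees each $\Lambda_i(X_i^*)$ stays bounded for $i\neq i_0$ uniformly enough that Lemma~2 applies, but this is already handled by Lemmas~1 and~2, so it reduces to bookkeeping.
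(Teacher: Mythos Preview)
Your approach is essentially the paper's, but you have the two inequality directions swapped, and you overwork the denominator.

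First, the ``free'' direction: non-negativity of the variance gives $\mathbb{E}_{\theta^*}[T_{\gamma_{th}}^2]\geq\alpha^2$, hence $\log\mathbb{E}_{\theta^*}[T_{\gamma_{th}}^2]\geq 2\log\alpha$, and since $\log\alpha<0$ this yields the \emph{upper} bound $\log\mathbb{E}_{\theta^*}[T_{\gamma_{th}}^2]/\log\alpha\leq 2$. So what must be proved is $\liminf\geq 2$, not $\limsup\leq 2$. Your displayed inequality has the sign reversed: an \emph{upper} bound on the (negative) numerator divided by the (negative) denominator produces a \emph{lower} bound on the ratio. Once you correct this, your computation with the bound \eqref{bound} at $\theta=\theta^*$ and Lemma~2 does exactly what is needed, namely $\liminf\geq 2$; this is precisely the paper's argument.

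Second, the denominator. The paper does not invoke subexponential convolution asymptotics at all, nor any upper bound on $\alpha$; it uses only the trivial inclusion $\{X_{i_0}>\gamma_{th}\}\subset\{S_N>\gamma_{th}\}$ (non-negativity of the $X_i$), giving $\log\alpha\geq-\Lambda_{i_0}(\gamma_{th})$. That one-sided bound, combined with the upper bound on $\log\mathbb{E}_{\theta^*}[T_{\gamma_{th}}^2]$ (both negative), already forces the ratio to be at least the quantity you compute, which tends to $2$ by Lemma~2. Your proposed two-sided bound is unnecessary, and its upper half $\mathbb{P}(S_N>\gamma_{th})\leq\sum_{i}\overline{F_i}(\gamma_{th})$ is in fact false in general (convolution tails are not dominated by the sum of marginal tails). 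Fortunately nothing in the proof requires it, so dropping it and keeping only $\alpha\geq\overline{F_{i_0}}(\gamma_{th})$ repairs the argument and brings it in line with the paper.
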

\par }
\begin{proof}
In (\ref{bound}), we have derived an upper bound on the second moment of $T_{\gamma_{th}}$ as
\begin{align}
\mathbb{E}_{\theta^*}\left[T_{\gamma_{th}}^2\right] \leq \frac{1}{\left(1-\theta^*\right)^{2N}}\exp\left(-2\theta^*\sum_{i=1}^{N}{\Lambda_i(X_i^*)}\right).
\end{align}
By setting $A(\gamma_{th})=\sum_{i=1}^{N}{\Lambda_i(X_i^*)}$ and replacing the optimal twisting parameter $\theta^*$ given in (\ref{theta}), we have
\begin{align}
\mathbb{E}_{\theta^*}\left[T_{\gamma_{th}}^2\right] \leq \left ( \frac{A(\gamma_{th})}{N}\right )^{2N} \exp \left( -2A(\gamma_{th})+2N\right).
\end{align}
By applying the logarithmic function on both side, it follows
\begin{align}\label{log}
\log \left ( \mathbb{E}_{\theta^*}\left[T_{\gamma_{th}}^2\right]\right ) \leq 2N \left ( 1+\log(\frac{A(\gamma_{th})}{N}) \right )-2A \left (\gamma_{th} \right).
\end{align}
On the other hand, using the non-negativity of $X_i, i\in \{1,2,...,N\}$, we have 
\begin{align}\label{asym}
\log\left(\alpha\right)=\log\left(P \left (  S_N > \gamma_{th}\right )\right) \geq \log\left(P \left (  X_{i_0} >\gamma_{th}\right )\right).
\end{align}
Note that for sufficiently large $\gamma_{th}$, the left and right-hand sides of \eqref{log} are negative. Therefore,
\begin{align}\label{final_res}
\frac{\log\left(\mathbb{E}_{\theta^*}\left[T_{\gamma_{th}}^2\right]\right)}{\log(\alpha)}\geq \frac{2N \left ( 1+\log(\frac{A(\gamma_{th})}{N}) \right )-2A(\gamma_{th})}{-\Lambda_{i_0}(\gamma_{th})}.
\end{align}
Finally, using Lemma 2, we have:
\begin{align}
\frac{2N \left ( 1+\log(\frac{A(\gamma_{th})}{N}) \right )-2A(\gamma_{th})}{-\Lambda_{i_0}(\gamma_{th})} &\underset{+\infty}{\sim}\frac{-2A(\gamma_{th})}{-\Lambda_{i_0}(\gamma_{th})}\nonumber \\
&\underset{+\infty}{\sim} 2 \label{52}.
\end{align}
Through the use of the non-negativity of the variance, we conclude the proof.
\end{proof}
\begin{rem}
\hspace{2mm} Under the i.i.d case, Assumption 1 is almost equivalent to the one stated in \cite{Juneja:2002:SHT:566392.566394}. They assumed also that the hazard rate is converging to zero whereas in our case this is not needed. The previous observation makes our assumption a bit weaker compared to \cite{Juneja:2002:SHT:566392.566394}. In addition, our optimal twisting parameter $\theta^*$ given in (\ref{theta}) tends to the same value (\ref{thetaiid}) derived in \cite{Juneja:2002:SHT:566392.566394}, as $\gamma_{th}$ goes to infinity. 
\end{rem}
\subsection{Generation of the Twisted Distribution}
{\ Generally, hazard rate twisting the original PDF of a RV $X$ does not result in a known distribution. One way to generate realizations of $X$ under $f_{\theta}(\cdot)$ could be performed via its CDF $F_{\theta}(\cdot)$. In fact, it is known that $F^{-1}_{\theta}(U)$, where $U$ is uniformly distributed RV over $[0,1]$, has the same distribution as $X$ under the hazard rate twisted PDF \cite{devroye:1986}. Let us consider a RV $X$ with an underlying PDF $f(\cdot)$ and CDF $F(.)$. From (\ref{twisted pdf}), the PDF $f_{\theta}(\cdot)$ associated to $X$ with hazard rate $\lambda(\cdot)$ and hazard function $\Lambda(\cdot)$ is
\begin{align}
\nonumber f_{\theta}(x)&=(1-\theta)\lambda(x) \exp(-(1-\theta)\Lambda(x))\\
&=(1-\theta)f(x)\exp(\theta \Lambda(x)).
\end{align}
Replacing $\lambda(\cdot)$ and $\Lambda(\cdot)$ by their definitions, we get
\begin{align}
f_{\theta}(x)=\frac{(1-\theta)f(x)}{(1-F(x))^{\theta}}.
\end{align}
By a simple integration, the corresponding CDF is given by
\begin{align}
F_{\theta}(x)=-\frac{1}{(1-F(x))^{\theta -1}}+1.
\end{align}
Finally, a simple computation leads to an exact expression of the CDF inverse of the RV $X$ under the hazard rate twisting technique  
\begin{align}\label{cdf_inv}
F_{\theta}^{-1}(y)=F^{-1}(1-(1-y)^{-\frac{1}{\theta -1}}),
\end{align}
where $F^{-1}(\cdot)$ is the CDF inverse of $X$ under the original PDF $f(\cdot)$.  A pseudo-code describing all steps to estimate $\alpha$ by our proposed hazard rate twisting approach is described in Algorithm 1.
\begin{algorithm}[H]
\caption{Optimized hazard rate twisting approach for the estimation of $\alpha$}
\begin{algorithmic}
\STATE \textbf{Inputs:} $M_{IS}$, $\gamma_{th}$.
\STATE \textbf{Outputs:} $\hat \alpha_{IS}$.
\STATE Find the optimal value of $\theta$ as in $(\ref{theta})$ by solving the minimization problem $(P')$.
\FOR {$i=1,...,M_{IS}$}
\STATE Generate $N$ independent realizations of the uniform distribution over $[0,1]$:  $U_1(\omega_i),U_2(\omega_i),...,U_N(\omega_i)$.
\STATE Compute $X_1(\omega_i),X_2(\omega_i),...,X_N(\omega_i)$ using $(\ref{cdf_inv})$ : $X_j(\omega_i)=F_{\theta}^{-1}(U_j(\omega(i))),j=1,2,...,N$
\STATE Evaluate $T_{\gamma_{th}}(\omega_i)$  as in  $(\ref{tgamma})$.
\ENDFOR
\STATE Compute the IS estimator as $\hat \alpha_{IS}=\frac{1}{M_{IS}}\sum_{i=1}^{M_{IS}}{T_{\gamma_{th}}(\omega_i)}$.
\end{algorithmic}
\end{algorithm}
\par }
\section{Applications}
{\ We consider two examples of distributions belonging to the class of subexponential distributions: the Log-normal and the Weibull (with shape parameter less than $1$) distributions. We will investigate for these two examples the solution of $(P')$.
\subsection{Weibull Distribution}
In this example, the PDF of $X_i$, $i=1,2,...,N$ is 
\begin{align}
f_i(x)=\frac{k_i}{\beta_i}\left (\frac{x}{\beta_i}\right )^{k_i-1} \exp \left (- \left (\frac{x}{\beta_i}\right )^{k_i}\right ), \hspace{2mm} x\geq 0.
\end{align}
where $k_i>0$ and $\beta_i>0$ denotes respectively the shape and the scale parameters. We focus on the case where the shape parameter is strictly less than $1$ since it is known that with this choice the Weibull RV is a subexponential distribution. The hazard rate and function for each $X_i$ are as follow
\begin{align}
\lambda_i(x)&=\frac{k_i}{\beta_i}\left (\frac{x}{\beta_i}\right )^{k_i-1},\\
\Lambda_i(x)&=\left (\frac{x}{\beta_i}\right )^{k_i}.
\end{align}
Let us now investigate the solution of the minimization problem $(P')$. We could prove through a simple computation that the objective function of $(P')$ is concave for $k_i<1$, $i=1,2,...,N$ and hence Assumption 1 is satisfied. In fact, the Hessian $H$ of the objective function at any point $X=(X_1,X_2,...,X_N)\in (\mathbb{R}^{+})^{N}$ is a diagonal matrix with diagonal elements
\begin{align}
[H(X_1,X_2,...,X_N)]_{ii}=\frac{k_i(k_i-1)}{\beta_i^2}\left (\frac{X_i}{\beta_i}\right )^{k_i-2},
\end{align}
which is strictly negative for $k_i<1$, $i=1,2,...,N$. In particular, the objective function is also concave on the convex set $S(N,\gamma_{th})=\{ X=(X_1,X_2,...,X_N) \in (\mathbb{R}^{+})^{N}, \text{ such that } \sum_{i=1}^{N}{X_i}=\gamma_{th}\}$. Therefore, the solution of $(P')$ is obtained in one of the extreme points of $S(N,\gamma_{th})$. In other words, the minimum is achieved when
\begin{align}
X_{i_0}^*=\gamma_{th}, \text{ and } X_i^*=0 \text{  } \forall i\neq i_0,
\end{align} 
where $i_0$ satisfying 
\begin{align}
\left (\frac{\gamma_{th}}{\beta_{i_0}}\right )^{k_{i_0}} \leq \left (\frac{\gamma_{th}}{\beta_{i}}\right )^{k_{i}}, \text{  } \forall i\neq i_0.
\end{align}
It is important to note that for large values of $\gamma_{th}$, the index $i_0$ depends only on the shape and scale parameters and independent of $\gamma_{th}$. More precisely, for $\gamma_{th}$ large enough, it is characterized by
\begin{align}
i_0=\operatorname{arg\,min}_i k_i.
\end{align}
Moreover, if there are more than one RV with minimum shape parameter, the index $i_0$ corresponds to the one with maximum scale parameter. 
\begin{rem}
\hspace{2mm} We have described in the previous section a method based on the inverse CDF $F_{\theta}^{-1}(\cdot)$ to generate samples of a RV $X$ under the twisted PDF $f_{\theta}(\cdot)$. For the particular Weibull distribution with parameters $k$ and $\beta$, the PDF $f_{\theta}(\cdot)$ is simply another Weibull distribution with the same shape parameter $k$ and a different scale parameter $\beta'$ as follows
\begin{align}
\nonumber f_{\theta}(x)&=(1-\theta)\lambda(x)\exp\left (-(1-\theta)\Lambda(x)\right )\\
\nonumber &=(1-\theta)\frac{k}{\beta}\left (\frac{x}{\beta}\right )^{k-1}\exp\left (-(1-\theta)(\frac{x}{\beta})^k\right )\\
&=\frac{k}{\beta'}\left (\frac{x}{\beta'}\right )^{k-1}\exp \left (-(\frac{x}{\beta'})^k\right ).
\end{align} 
where $\beta'=\frac{\beta}{(1-\theta)^{1/k}}$.
\end{rem}
\subsection{Log-Normal Distribution}
The PDF of each $X_i$, $i=1,2,..,N$ is given by 
\begin{align}
f_i(x)=\frac{1}{\sqrt{2\pi}\sigma_ix}\exp\left(-\frac{\left(\log(x)-\mu_i\right)^2}{2\sigma_i^2}\right), \hspace{2mm}x>0,\label{pdf}
\end{align}
where $\mu_i$ and $\sigma_i$ are the mean and the standard deviation of the associated Gaussian RV $Y_i=\log(X_i)$. In communication, the decibel unit is generally used. Hence, it is more convenient to define a Gaussian RV as $Z_i=10\log_{10}(X_i)$ with mean $\mu_{i,dB}$ and standard deviation $\sigma_{i,dB}$. The relation between the two Gaussian RVs $Y_i$ and $Z_i$, $i=1,2,...,N$, are
\begin{align}
 \mu_{i}=\xi \mu_{i,dB} \text{ and }\sigma_{i}=\xi \sigma_{i,dB}
\end{align}
where $\xi=\log(10)/10$. The expressions of $\lambda_i(\cdot)$ and $\Lambda_i(\cdot)$ are given by
\begin{align}
\lambda_i(x)&=\frac{\frac{1}{x\sigma_i} \phi\left(\frac{\log(x)-\mu_i}{\sigma_i} \right) }{1-\Phi\left(\frac{\log(x)-\mu_i}{\sigma_i} \right) },\\
\Lambda_i(x)&=-\log\left(1-\Phi\left(\left(\frac{\log(x)-\mu_i}{\sigma_i} \right) \right)\right),
\end{align}
where $\phi(\cdot)$ and $\Phi(\cdot)$ are respectively the PDF and the CDF of a standard Gaussian distribution. In this example, the solution of $(P')$ is not straightforwardly computed as the Weibull distribution. The difficulty to find out the analytic solution of the minimization problem $(P')$ arises from the fact that the hazard function for a Log-normal RV does not have the concavity property as for the Weibull distribution. However, it is known that the hazard function of the Log-normal distribution has the property stated in Assumption 1. Therefore, the minimizers of $(P')$ satisfies Lemma 1 which states that there exists an index $i_0$ such that $X_{i_0}^*$ is close to $\gamma_{th}$ whereas the other components are bounded. Hence, since the hazard function $\Lambda_i$ is an increasing function, the index $i_0$ satisfies for a sufficiently large $\gamma_{th}$
\begin{align}
\left(\log(\gamma_{th})-\mu_{i_0}\right)/\sigma_{i_0} \leq \left(\log(\gamma_{th})-\mu_{i}\right)/\sigma_{i}, \forall i \neq i_0.
\end{align}
Thus, for $\gamma_{th}$ large enough, the index $i_0$ is independent of $\gamma_{th}$ and corresponds to 
\begin{align}
i_0=\operatorname{arg\,max}\sigma_i.
\end{align}
Moreover, if there exists another index with a maximum standard deviation, $i_0$ corresponds to the RV with a maximum mean. 
\section{Simulation Results}
{\ In this section, some selected simulation results are shown to compare the naive MC simulation and the proposed IS simulation technique. Two performance metrics will be used to compare these two approaches. The relative error of the naive MC estimator is defined through the use of the CLT as
\begin{align}\label{MC}
\epsilon_{MC}=C\frac{\sqrt{\hat \alpha_{IS}(1-\hat \alpha_{IS})}}{\sqrt{M_{MC}}\hat\alpha_{IS}},
\end{align}
and the relative error of the IS MC estimator is given by
\begin{align}\label{is}
\epsilon_{IS}=C\frac{\sqrt{\rm {var}_{p^*}\left[T^2_{\gamma_{th}}\right]}}{\sqrt{M_{IS}}\hat \alpha_{IS}},
\end{align}
where $C$ is the confidence constant equal to $1.96$ (for $95\%$ confidence interval), and $M_{MC}$ and $M_{IS}$ are the number of samples for the naive MC and the IS MC simulations, respectively. Note that the use of  $\hat \alpha_{IS}$ in (\ref{MC}) instead of $\hat \alpha_{MC}$ gives a more accurate estimate of the standard deviation of $\hat \alpha_{MC}$. 
For a fixed relative error, we define the efficiency indicator of the IS MC technique compared to the naive MC simulation as 
\begin{align}\label{k}
k=\frac{M_{MC}}{M_{IS}}=\frac{\hat \alpha_{IS}(1-\hat \alpha_{IS})}{\rm{var}_{p^*}\left [T_{th}\right ]}.
\end{align}
The more the efficiency $k$ is large, the more we need samples in the naive MC simulation to reach the relative accuracy given by IS. In other words, the bigger is $k$, the more efficient is the proposed IS technique.\par}
\subsection{Frequency of Occurrence}
{\ As it was mentioned before, a key characteristic of a good IS technique is to emphasize the sampling of important points, i.e the number of realizations satisfying $S_N \geq \gamma_{th}$. We define the frequency of occurrence as the number of samples which satisfy $S_N\geq\gamma_{th}$. In our first simulation results, we consider the sum of two i.i.d. Log-normal RVs with mean $\mu_{dB}=0\hspace{1mm}dB$ and standard deviation $\sigma_{dB}=6\hspace{1mm}dB$.
\begin{table}[t]
\begin{center}
\caption{Frequency of occurrence for the sum of two i.i.d. Log-normal with $\mu_{dB}=0 \hspace{1mm}dB$, $\sigma_{dB}=6\hspace{1mm}dB$, and $M_{IS}=M_{MC}=10^5$.} 
\label{tab1} 
\begin{tabular}{|c|c|c|c|}
  \hline 
  Threshold (dB)  & $\hat \alpha_{IS}$ & IS frequency & MC frequency \\
  \hline
  $15$ & $1.47 \times 10^{-2}$ &  $28603$  &  $1427$ \\  
  $20$ & $9.55 \times 10^{-4}$&  $27631$  &  $99$  \\
  $25$ & $3.17 \times 10^{-5}$&  $26484$  &  $3$  \\
  $30$ & $5.8 \times 10^{-7}$ & $26253$  &   $0$  \\
  $35$ &  $0.55 \times 10^{-8}$& $25982$  &   $0$  \\ 
  \hline
\end{tabular}
\end{center}
\end{table}
\begin{table}[t]
\begin{center}
\caption{Frequency of occurrence for the sum of two i.i.d. Weibull distribution  with $k=0.5$, $\beta=1$, and $M_{IS}=M_{MC}=10^5$.} 
\label{tab2} 
\begin{tabular}{|c|c|c|c|}
  \hline 
  Threshold ( dB )  & $\hat \alpha_{IS}$ & IS frequency & MC frequency \\
  \hline
  $10$ & $1.01 \times 10^{-1}$ & $29273$  &   $10097$  \\
  $15$ & $1.67 \times 10^{-2}$ &  $29270$  &  $852$ \\  
  $20$ & $1.06 \times 10^{-4}$&  $29244$  &  $6$  \\
  $25$ & $4.15 \times 10^{-8}$&  $29143$  &  $0$  \\
  $30$ &  $3.88 \times 10^{-14}$& $29049$  &   $0$  \\ 
  \hline
\end{tabular}
\end{center}
\end{table}
In Table \ref{tab1}, we have computed the frequency of occurrence using the naive MC simulation and the proposed IS technique, with $M_{MC}=M_{IS}=10^5$. Table \ref{tab1} exhibits an important feature of the IS change of measure where the frequency of realizations belonging to the rare set $S_N \geq \gamma_{th}$ is almost constant as we increase the threshold. On the other hand, the failure of sampling under the original SLN distribution is clear through its inability to construct realizations in the rare sets. 
\begin{figure}[t]
\centering
\setlength\figureheight{0.33\textwidth}
\setlength\figurewidth{0.40\textwidth}
\input{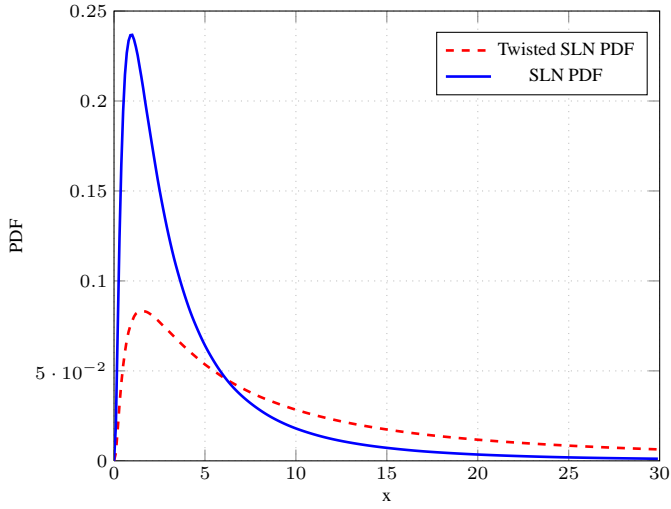}
\caption{Twisted and original PDFs of the sum of two i.i.d Log-normal RVs with $\gamma_{th}=20$, $\mu_{dB}=0\hspace{1mm}dB$, and $\sigma_{dB}=6\hspace{1mm}dB$.}
\label{twist dist fig}
\end{figure}
In Table \ref{tab2}, we show the same computation using the sum of two i.i.d Weibull distribution with shape parameter $k=0.5$ and scale parameter $\beta=1$. Again, important points are sampled more frequently using the IS technique and their frequencies remains almost constant as we increase the threshold.

To illustrate this statement, we plotted in Fig. \ref{twist dist fig} the twisted against the original SLN distributions for a fixed threshold $\gamma_{th}=20$.  Clearly, we see that twisting the hazard rate of each component in the sum leads to a more heavier twisted PDF. As a consequence, the events which exceed the given threshold are more likely to occur under the twisted PDF than under the original one.
\par}
\begin{figure}[t]
\centering
\setlength\figureheight{0.33\textwidth}
\setlength\figurewidth{0.40\textwidth}
%
%
%
%
\scalefont{0.6}
\begin{tikzpicture}

\begin{semilogyaxis}[%
width=\figurewidth,
height=\figureheight,
scale only axis,
xmin=15, xmax=34,
xlabel={$\gamma{}_{\text{th}}\text{(dB)}$},
xmajorgrids,
ymin=1e-08, ymax=0.1,
yminorticks=true,
ylabel={CCDF},
ymajorgrids,
yminorgrids,
 grid style={dotted},
legend style={at={(0.467317708333333,0.777132389187107)},anchor=south west,draw=black,fill=white,align=left}]
\addplot [
color=red,
solid,
line width=1.0pt,
mark=triangle,
mark options={solid}
]
coordinates{
 (15,0.0144631492731509)(16,0.00897544908017346)(17,0.00530196548237577)(18,0.00299982782198998)(19,0.00172595544199708)(20,0.000921603923568875)(21,0.000510716968903547)(22,0.000261776786980105)(23,0.000133980859517936)(24,6.46579902073464e-05)(25,3.31541142576449e-05)(26,1.47649808409257e-05)(27,7.03778206968063e-06)(28,3.16824271191536e-06)(29,1.31136173024463e-06)(30,5.79747499025584e-07)(31,2.51294672300438e-07)(32,9.26309963528191e-08)(33,3.6661797743031e-08)(34,1.5118996523686e-08) 
};
\addlegendentry{$\text{IS Simulation M}_{\text{IS}}\text{=5}\times\text{ 10}^{\text{4}}$};

\addplot [
color=black,
solid,
line width=1.0pt,
mark=square,
mark options={solid}
]
coordinates{
 (15,0.014798)(16,0.009006)(17,0.005159)(18,0.002989)(19,0.001673)(20,0.000942)(21,0.000481)(22,0.000243)(23,0.00013)(24,6.4e-05)(25,3.3e-05)(26,1.3e-05)(27,1e-05)(28,6e-06)(29,1e-06)(30,3e-06)(31,0) 
};
\addlegendentry{$\text{Naive Simulation M}_{\text{MC}}\text{=10}^\text{6}$};

\addplot [
color=blue,
solid,
line width=1.0pt,
mark=o,
mark options={solid}
]
coordinates{
 (15,0.01474923)(16,0.00890007)(17,0.00525938)(18,0.0030105)(19,0.0016911)(20,0.0009286)(21,0.00049909)(22,0.00025997)(23,0.00013192)(24,6.576e-05)(25,3.288e-05)(26,1.428e-05)(27,6.74e-06)(28,3.09e-06)(29,1.4e-06)(30,5.1e-07)(31,1.8e-07)(32,1.1e-07)(33,4e-08)(34,1e-08) 
};
\addlegendentry{$\text{Naive Simulation M}_{\text{MC}}\text{=10}^\text{8}$};

\end{semilogyaxis}
\end{tikzpicture}%
\caption{CCDF of the sum of two i.i.d Log-normal RVs with mean $0$ dB, and standard deviation $6$ dB.}
\label{fig1}
\end{figure}
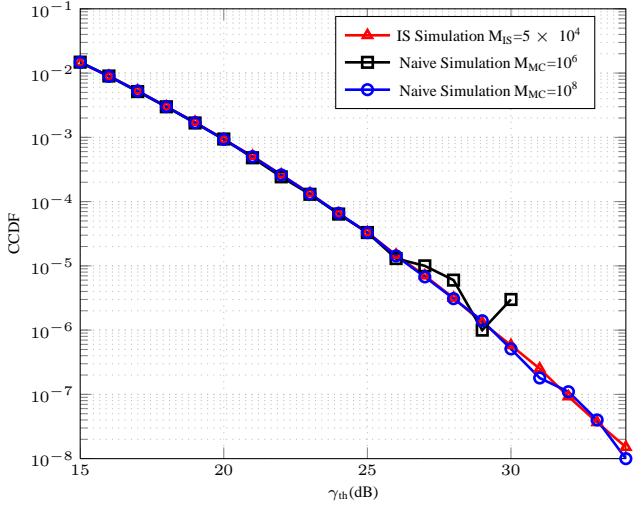
\subsection{Efficiency of the Proposed IS Algorithm}
{\ In Fig. \ref{fig1}, the CCDF of the sum of two i.i.d Log-normal RVs is presented using both the naive MC simulation and our IS simulation technique. 
The inefficiency of the naive simulation is clear in Fig. \ref{fig1}. In fact, a remarkable oscillatory behavior of the naive MC technique is observed using a number of samples $M_{MC}=10^6$ for $\gamma_{th}\geq 25$ dB. Besides, as we increase the threshold, the naive MC estimator is almost zero. Indeed, more samples are required in order to overcome this failure and to get a good approximation of the CCDF. 
The naive technique with $M_{MC}=10^8$ is also presented in Fig. \ref{fig1} and is compared to IS simulation. We point out that both methods coincide and we have a good approximation of the CCDF up to a probability of order $10^{-6}$. Then, an oscillation of the tail of the CCDF using the naive MC approach is observed, whereas IS technique gives a smooth curve. Thus, our IS technique gives a more accurate result using a less number of samples $5\times 10^4$, in contrast with $10^8$ samples used in the naive simulation. In order to confirm the previous statement, we need to analyze the relative error given by both techniques.
\par}
\begin{figure}[t]
\centering
\setlength\figureheight{0.33\textwidth}
\setlength\figurewidth{0.40\textwidth}
%
%
%
%
\scalefont{0.6}
\begin{tikzpicture}

\begin{axis}[%
width=\figurewidth,
height=\figureheight,
scale only axis,
xmin=15, xmax=34,
xlabel={$\gamma{}_{\text{th}}\text{(dB)}$},
xmajorgrids,
ymin=0, ymax=1.6,
ylabel={Relative Error},
ymajorgrids,
grid style={dotted},
legend style={at={(0.696223958333334,0.842967589797572)},anchor=south west,draw=black,fill=white,align=left}]
\addplot [
color=red,
solid,
line width=1.0pt,
mark=triangle,
mark options={solid}
]
coordinates{
 (15,0.0227745275945748)(16,0.0244973309814561)(17,0.0268346957621647)(18,0.0291475717562798)(19,0.0316757389216932)(20,0.034276897173828)(21,0.0367890646506389)(22,0.0401456190536656)(23,0.0427842592541285)(24,0.0467974271937433)(25,0.0482358040259837)(26,0.0539978861455024)(27,0.055926057140414)(28,0.0588217878676441)(29,0.0631388056767334)(30,0.0677935889144354)(31,0.0709130248993684)(32,0.0743266318024561)(33,0.0793742467807016)(34,0.0825725776493499) 
};
\addlegendentry{IS MC};

\addplot [
color=blue,
solid,
line width=1.0pt,
mark=o,
mark options={solid}
]
coordinates{
 (15,0.00161793500379201)(16,0.00205953977829415)(17,0.00268462267021703)(18,0.00357318517681155)(19,0.00471374593936907)(20,0.00645332738323871)(21,0.0086707167333998)(22,0.0121124961871274)(23,0.0169318961477096)(24,0.0243742314107281)(25,0.034039263673743)(26,0.0510077798867069)(27,0.0738816584909011)(28,0.110114918356609)(29,0.171156948068637)(30,0.257416572605642)(31,0.390988853219711)(32,0.643988161176828)(33,1.02364456479302)(34,1.5940230036231) 
};
\addlegendentry{Naive MC};

\end{axis}
\end{tikzpicture}%
\caption{Relative error of the sum of two i.i.d Log-normal RVs with mean $0$ dB, standard deviation $6$ dB, $M_{MC}=10^8$, and $M_{IS}=5\times 10^4$.}
\label{fig2}
\end{figure}
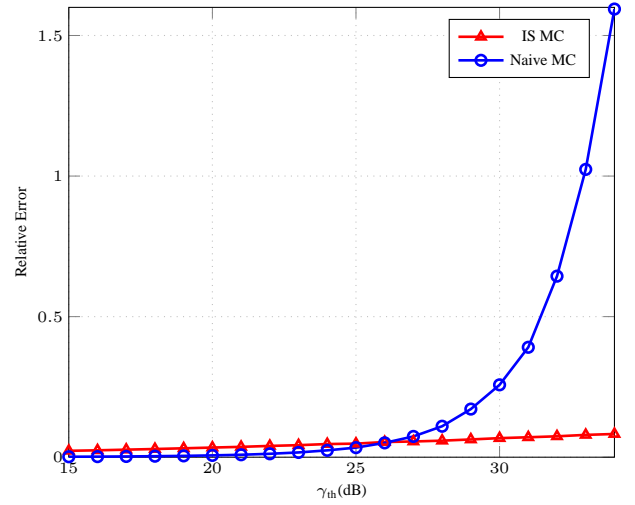
{ \ In Fig. \ref{fig2}, we plotted the relative error of the naive and the IS simulations as function of the threshold.
We point out a slow variation of the relative error of the naive MC simulation for $\gamma_{th}<25$, then a very rapid increase is observed as we increase the threshold. In fact, in the first region the number of samples is sufficient to guarantee an accurate approximation, whereas in the second region the naive simulation fails to well estimate the CCDF and hence substantial samples are required to ensure a good accuracy, i.e much more than $10^8$ realizations. On the other hand, IS technique shows an interesting result in Fig. \ref{fig2} where the variation of its relative error is extremely slow compared to the naive simulation. Consequently, with $M_{IS}$ much smaller than $M_{MC}$, our IS approach approximates the CCDF more efficiently than the naive simulation.
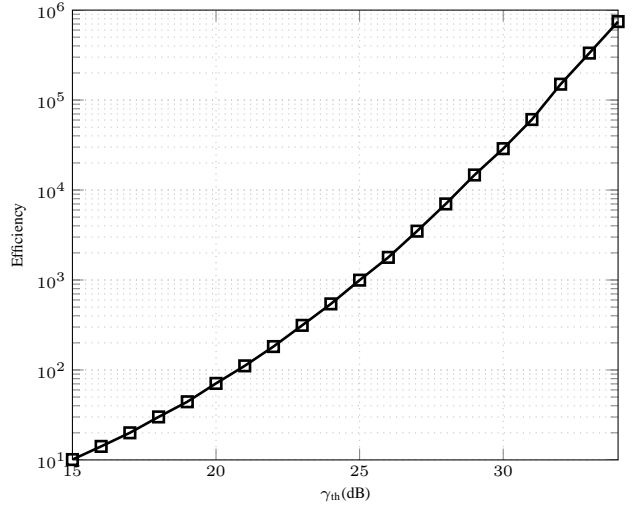
\begin{figure}[t]
\centering
\setlength\figureheight{0.33\textwidth}
\setlength\figurewidth{0.40\textwidth}
%
%
%
%
\scalefont{0.6}
\begin{tikzpicture}

\begin{semilogyaxis}[%
width=\figurewidth,
height=\figureheight,
scale only axis,
xmin=15, xmax=34,
xlabel={$\gamma{}_{\text{th}}\text{(dB)}$},
xmajorgrids,
ymin=10, ymax=1000000,
yminorticks=true,
ymajorgrids,
yminorgrids,
grid style={dotted},
ylabel={Efficiency}]
\addplot [
color=black,
solid,
line width=1.0pt,
mark=square,
mark options={solid},
forget plot
]
coordinates{
 (15,10.0937695017197)(16,14.1362046745169)(17,20.0171917990732)(18,30.0563530026784)(19,44.2903210845611)(20,70.8915362806612)(21,111.097043058076)(22,182.062702045618)(23,313.237070262072)(24,542.560160795057)(25,995.981486179885)(26,1784.63464863811)(27,3490.39729400764)(28,7008.83565596515)(29,14696.9202738764)(30,28835.3848218922)(31,60800.4404716958)(32,150140.154125769)(33,332635.909274738)(34,745328.074899772) 
};
\end{semilogyaxis}
\end{tikzpicture}%
\caption{Efficiency of the sum of two i.i.d Log-normal RVs with mean $0$ dB, standard deviation $6$ dB, $M_{MC}=10^8$, and $M_{IS}=5\times 10^4$.}
\label{fig3}
\end{figure}
\par}
{\ In Fig. \ref{fig3}, we plotted the efficiency indicator $k$ as function of the threshold. From this figure, we deduce that the efficiency is increasing rapidly, almost exponentially. Hence, the more we increase the threshold the more efficient is our IS technique. This result is expected since $k$ is proportional to the number of samples $M_{MC}$ that we need to generate in order to absorb the rapid increase of the relative error of the naive MC simulation, i.e reach the relative accuracy given by the IS approach. Besides, Fig. \ref{fig3} illustrates also that the IS technique is more efficient for the considered range of probability, i.e $k$ always bigger than 1.  
\par}   
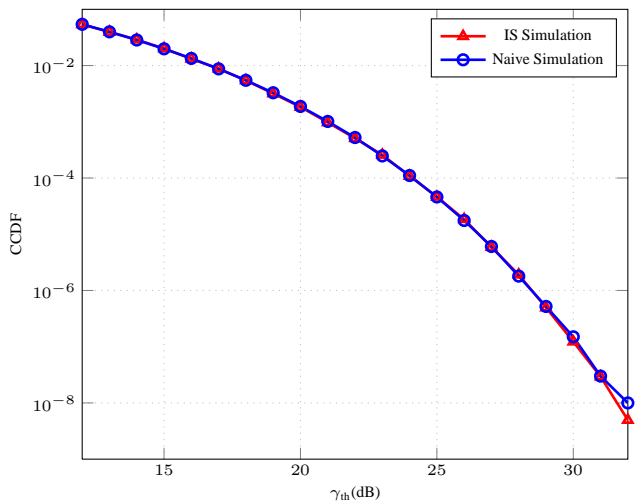
\begin{figure}[t]
\centering
\setlength\figureheight{0.33\textwidth}
\setlength\figurewidth{0.40\textwidth}
%
%
%
%
\scalefont{0.6}
\begin{tikzpicture}

\begin{semilogyaxis}[%
width=\figurewidth,
height=\figureheight,
scale only axis,
xmin=12, xmax=32,
xlabel={$\gamma{}_{\text{th}}\text{(dB)}$},
xmajorgrids,
ymin=1e-09, ymax=0.1,
yminorticks=true,
ylabel={CCDF},
ymajorgrids,
yminorgrids,
 grid style={dotted},
legend style={draw=black,fill=white,align=left}]
\addplot [
color=red,
solid,
line width=1.0pt,
mark=triangle,
mark options={solid}
]
coordinates{
 (12,0.054016412360405)(13,0.0396945092118784)(14,0.0289023548159065)(15,0.0200772571448005)(16,0.0131873778230559)(17,0.00880247544094897)(18,0.005440435943709)(19,0.00317852875498783)(20,0.00183586347701705)(21,0.000974508315187449)(22,0.00050896916974626)(23,0.000257413508855184)(24,0.000108621893261887)(25,4.62073666677634e-05)(26,1.84858692154886e-05)(27,5.90664433618488e-06)(28,1.90974128415054e-06)(29,4.95100603157127e-07)(30,1.22761616110306e-07)(31,2.93442342016019e-08)(32,4.96039066487983e-09) 
};
\addlegendentry{IS Simulation};

\addplot [
color=blue,
solid,
line width=1.0pt,
mark=o,
mark options={solid}
]
coordinates{
 (12,0.05416289)(13,0.03977758)(14,0.02849642)(15,0.01985949)(16,0.01340922)(17,0.00873453)(18,0.00546904)(19,0.00328279)(20,0.00187407)(21,0.00101443)(22,0.00052514)(23,0.00024704)(24,0.00011072)(25,4.599e-05)(26,1.758e-05)(27,6.06e-06)(28,1.8e-06)(29,5.2e-07)(30,1.5e-07)(31,3e-08)(32,1e-08) 
};
\addlegendentry{Naive Simulation};

\end{semilogyaxis}
\end{tikzpicture}%
\caption{CCDF of the sum of two independent Weibull RVs with $\beta_1=\beta_2=1$, $k_1=0.4$, $k_2=0.8$, $M_{IS}=5 \times 10^4$, and $M_{MC}=10^8$.}
\label{fig4}
\end{figure}
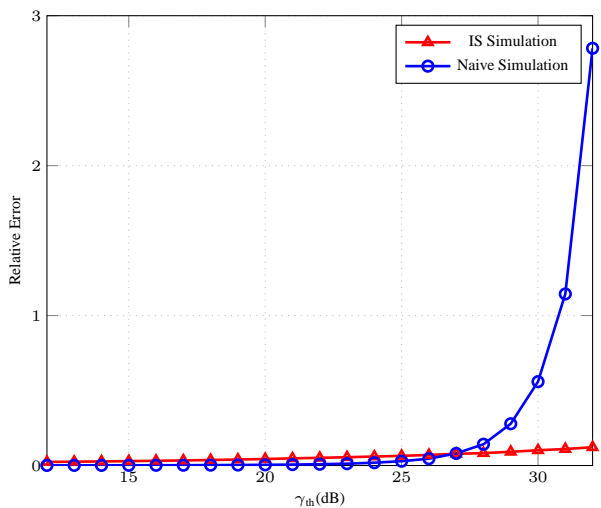
\begin{figure}[t]
\centering
\setlength\figureheight{0.33\textwidth}
\setlength\figurewidth{0.40\textwidth}
%
%
%
%
\scalefont{0.6}
\begin{tikzpicture}

\begin{axis}[%
width=\figurewidth,
height=\figureheight,
scale only axis,
xmin=12, xmax=32,
xlabel={$\gamma{}_{\text{th}}\text{(dB)}$},
xmajorgrids,
ymin=0, ymax=3,
ylabel={Relative Error},
ymajorgrids,
grid style={dotted},
legend style={draw=black,fill=white,align=left}]
\addplot [
color=red,
solid,
line width=1.0pt,
mark=triangle,
mark options={solid}
]
coordinates{
 (12,0.0238971007385092)(13,0.0254932924213162)(14,0.0272486912487197)(15,0.0291601750640427)(16,0.0317158096601625)(17,0.033826145963268)(18,0.036795408655573)(19,0.0400326811075467)(20,0.0434935933063397)(21,0.047760644052455)(22,0.0513201408498082)(23,0.055336399583328)(24,0.059989050200633)(25,0.0642881695422758)(26,0.0705516725086687)(27,0.0781289934777545)(28,0.0826598221258691)(29,0.0916241449510802)(30,0.101693785413103)(31,0.109979298476464)(32,0.122042673610939) 
};
\addlegendentry{IS Simulation};

\addplot [
color=blue,
solid,
line width=1.0pt,
mark=o,
mark options={solid}
]
coordinates{
 (12,0.000820228781012139)(13,0.000964041121034194)(14,0.00113611137394775)(15,0.00136930375558535)(16,0.00169548594803438)(17,0.00207985838170784)(18,0.00265005336637392)(19,0.00347097664136367)(20,0.00457021741930358)(21,0.00627554693176206)(22,0.00868559918085474)(23,0.0122147476813604)(24,0.018805019973063)(25,0.0288330395581599)(26,0.0455860664351802)(27,0.0806462872300733)(28,0.141830093221011)(29,0.278553895380329)(30,0.559402942084033)(31,1.14418085033969)(32,2.78290338804418) 
};
\addlegendentry{Naive Simulation};

\end{axis}
\end{tikzpicture}%
\caption{Relative error of the sum of two independent Weibull RVs with $\beta_1=\beta_2=1$, $k_1=0.4$, $k_2=0.8$, $M_{IS}=5 \times 10^4$, and $M_{MC}=10^8$.}
\label{fig5}
\end{figure}
{\ In the second simulation results, we consider the sum of two independent Weibull distribution with same scale parameter $\beta=1$, and with different shape parameters $k_1=0.4$, and $k_2=0.8$. In Fig. \ref{fig4}, Fig. \ref{fig5}, and Fig. \ref{fig6}, we plotted the CCDF, the relative error, and the efficiency, respectively. We note that in this case also, the proposed IS technique gives an accurate and efficient approximation of the CCDF and results in a substantial computational gain.
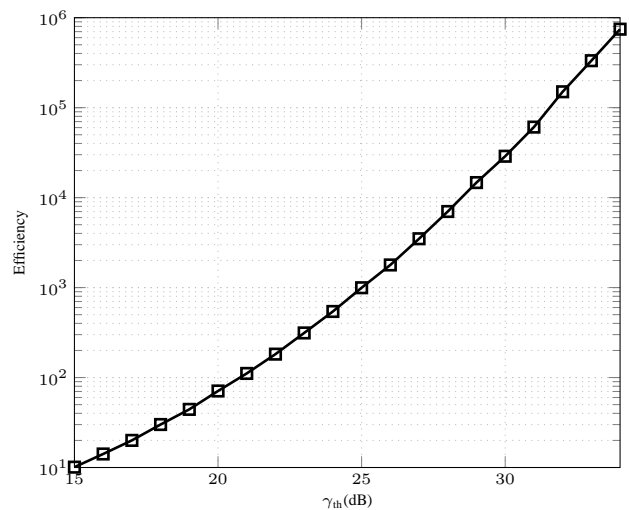
\begin{figure}[t]
\centering
\setlength\figureheight{0.33\textwidth}
\setlength\figurewidth{0.40\textwidth}
%
%
%
%
\scalefont{0.6}
\begin{tikzpicture}

\begin{semilogyaxis}[%
width=\figurewidth,
height=\figureheight,
scale only axis,
xmin=15, xmax=34,
xlabel={$\gamma{}_{\text{th}}\text{(dB)}$},
xmajorgrids,
ymin=10, ymax=1000000,
yminorticks=true,
ymajorgrids,
yminorgrids,
grid style={dotted},
ylabel={Efficiency}]
\addplot [
color=black,
solid,
line width=1.0pt,
mark=square,
mark options={solid},
forget plot
]
coordinates{
 (15,10.0937695017197)(16,14.1362046745169)(17,20.0171917990732)(18,30.0563530026784)(19,44.2903210845611)(20,70.8915362806612)(21,111.097043058076)(22,182.062702045618)(23,313.237070262072)(24,542.560160795057)(25,995.981486179885)(26,1784.63464863811)(27,3490.39729400764)(28,7008.83565596515)(29,14696.9202738764)(30,28835.3848218922)(31,60800.4404716958)(32,150140.154125769)(33,332635.909274738)(34,745328.074899772) 
};
\end{semilogyaxis}
\end{tikzpicture}%
\caption{Efficiency of the sum of two independent Weibull RVs with $\beta_1=\beta_2=1$, $k_1=0.4$, $k_2=0.8$, $M_{IS}=5 \times 10^4$, and $M_{MC}=10^8$.}
\label{fig6}
\end{figure}
\par}
\subsection{Near-Optimality of the Minmax Twisting Parameter}
In our next simulation results, we aim to analyze the sensibility of the second moment of the RV $T_{\gamma_{th}}$ with respect to the twisting parameter $\theta$. Since our twisting parameter $\theta^*$ given in (\ref{theta}) is chosen to minimize an upper bound on $\mathbb{E}_{\theta} \left [ T_{\gamma_{th}}^2\right ]$, we need to investigate whether $\theta^*$ is close to the optimal unknown twisting parameter, that is the value that minimizes the actual value of $\mathbb{E}_{\theta} \left [ T_{\gamma_{th}}^2 \right ]$. We consider the sum of two i.i.d Weibull RVs with shape and scale parameters equal to $0.5$ and $1$,  respectively. In Fig. \ref{fig7}, we plot the upper bound (\ref{bound}) and the actual value of $\mathbb{E}_{\theta} \left [ T_{\gamma_{th}}^2\right ]$  function in $\theta$ and for different value of the threshold $\gamma_{th}$. We note that the exact computation of $\mathbb{E}\left [ T_{\gamma_{th}}^2\right ]$ has a unique minimum which is closer to our choice $\theta^*$. Moreover, as $\gamma_{th}$ increases, the difference between the two minimizers becomes negligible and thus we tend to the optimal value. Another important deduction is that the second moment is slowly varying with respect to the twisting parameter $\theta$ especially in the neighborhood of the optimal value. Hence, our choice of $\theta^*$ is actually reasonable since it almost results in approximately the biggest reduction of variance.
\begin{figure}[t]
\centering
\setlength\figureheight{0.17\textwidth}
\setlength\figurewidth{0.19\textwidth}
\input{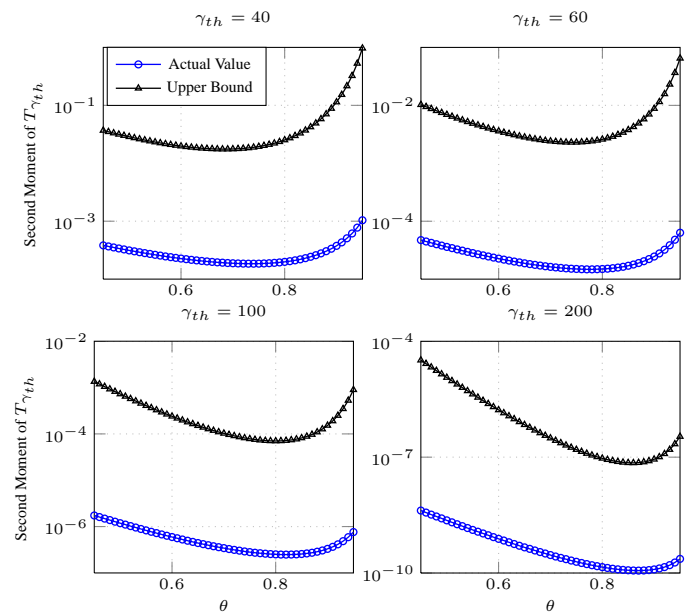}
\caption{Actual value and upper bound of $\mathbb{E} \left [ T_{\gamma_{th}}^2 \right ]$ function in $\theta$ for the sum of two i.i.d Weibull RVs with $k_1=k_2=0.5$ ,and $\beta=1$.}
\label{fig7}
\end{figure}
\par}
\section{Conclusion}
In this paper, we developed an efficient hazard rate twisting  technique for the estimation of the probability that a sum of independent RVs exceeds any given threshold.  We presented a general procedure to find the best possible twisting parameter which leads to the possible largest reduction of the IS estimator variance for all possible values of the threshold. Besides, this approach, which seems to be consistent with the class of subexponential distributions, results in ensuring the asymptotic optimality criterion as the threshold goes to infinity. Numerical simulations showed that the optimized IS approach could reach the same accuracy as the naive MC simulation with a substantial computational gain. This alternative technique could serve as a benchmark to study the accuracy of  future closed-form approximations of the quantity of interest. 
\bibliography{biblio}

\begin{thebibliography}{10}
\providecommand{\url}[1]{#1}
\csname url@samestyle\endcsname
\providecommand{\newblock}{\relax}
\providecommand{\bibinfo}[2]{#2}
\providecommand{\BIBentrySTDinterwordspacing}{\spaceskip=0pt\relax}
\providecommand{\BIBentryALTinterwordstretchfactor}{4}
\providecommand{\BIBentryALTinterwordspacing}{\spaceskip=\fontdimen2\font plus
\BIBentryALTinterwordstretchfactor\fontdimen3\font minus
  \fontdimen4\font\relax}
\providecommand{\BIBforeignlanguage}[2]{{%
\expandafter\ifx\csname l@#1\endcsname\relax
\typeout{** WARNING: IEEEtran.bst: No hyphenation pattern has been}%
\typeout{** loaded for the language `#1'. Using the pattern for}%
\typeout{** the default language instead.}%
\else
\language=\csname l@#1\endcsname
\fi
#2}}
\providecommand{\BIBdecl}{\relax}
\BIBdecl

\bibitem{alouini}
{M.K. Simon and M.-S. Alouini}, \emph{Digital Communication over Fading
  Channels, 2nd Edition}.\hskip 1em plus 0.5em minus 0.4em\relax New York:
  Wiley, 2004.

\bibitem{rubino2009rare}
G.~Rubino and B.~Tuffin, \emph{Rare Event Simulation using {M}onte {C}arlo
  Methods}.\hskip 1em plus 0.5em minus 0.4em\relax Wiley, 2009.

\bibitem{192731}
D.~Lu and K.~Yao, ``Improved importance sampling technique for efficient
  simulation of digital communication systems,'' \emph{IEEE Journal on Selected
  Areas in Communications}, vol.~6, no.~1, pp. 67--75, {J}an. 1988.

\bibitem{31142}
C.~Jeruchim, P.~Hahn, K.~Smyntek, and R.~Ray, ``An experimental investigation
  of conventional and efficient importance sampling,'' \emph{IEEE Transactions
  on Communications}, vol.~37, no.~6, pp. 578--587, {J}un. 1989.

\bibitem{52645}
N.~Beaulieu, ``A composite importance sampling technique for digital
  communication system simulation,'' \emph{IEEE Transactions on
  Communications}, vol.~38, no.~4, pp. 393--396, Apr. 1990.

\bibitem{212308}
H.-J. Schlebusch, ``On the asymptotic efficiency of importance sampling
  techniques,'' \emph{IEEE Transactions on Information Theory}, vol.~39, no.~2,
  pp. 710--715, Mar. 1993.

\bibitem{54903}
J.~Sadowsky and J.~Bucklew, ``On large deviations theory and asymptotically
  efficient {M}onte {C}arlo estimation,'' \emph{IEEE Transactions on
  Information Theory}, vol.~36, no.~3, pp. 579--588, May. 1990.

\bibitem{179349}
J.~Sadowsky, ``On the optimality and stability of exponential twisting in
  {M}onte {C}arlo estimation,'' \emph{IEEE Transactions on Information Theory},
  vol.~39, no.~1, pp. 119--128, Jan. 1993.

\bibitem{380042}
K.~Ben~Letaief, ``Performance analysis of digital lightwave systems using
  efficient computer simulation techniques,'' \emph{IEEE Transactions on
  Communications}, vol.~43, no. 234, pp. 240--251, {F}eb. 1995.

\bibitem{Stuber:2001:PMC:368633}
G.~L. St\"{u}ber, \emph{Principles of Mobile Communication, 2nd Edition.}\hskip
  1em plus 0.5em minus 0.4em\relax Norwell, MA, USA: Kluwer Academic
  Publishers, 2001.

\bibitem{Ghavami}
M.~Ghavami, R.~Kohno, and L.~Michael, \emph{Ultra wideband signals and systems
  in communication engineering}.\hskip 1em plus 0.5em minus 0.4em\relax
  Chichester: Wiley, 2004.

\bibitem{journals/twc/NavidpourUK07}
S.~M. Navidpour, M.~Uysal, and M.~Kavehrad, ``{BER} performance of free-space
  optical transmission with spatial diversity.'' \emph{IEEE Transactions on
  Wireless Communications}, vol.~6, no.~8, pp. 2813--2819, {A}ug. 2007.

\bibitem{1512431}
N.~Sagias and G.~Karagiannidis, ``Gaussian class multivariate weibull
  distributions: Theory and applications in fading channels,'' \emph{IEEE
  Transactions on Information Theory}, vol.~51, no.~10, pp. 3608--3619, {O}ct.
  2005.

\bibitem{837048}
F.~Babich and G.~Lombardi, ``Statistical analysis and characterization of the
  indoor propagation channel,'' \emph{IEEE Transactions on Communications},
  vol.~48, no.~3, pp. 455--464, {M}ar. 2000.

\bibitem{900150}
A.~Healey, C.~Bianchi, and K.~Sivaprasad, ``Wideband outdoor channel sounding
  at 2.4 {GHz},'' in \emph{IEEE Conference on Antennas and Propagation for
  Wireless Communications}, Waltham, MA, US, {N}ov. 2000.

\bibitem{citeulike:6297231}
L.~Fenton, ``The sum of {L}og-normal probability distributions in scatter
  transmission systems.'' \emph{IRE Transactions on Communications Systems},
  vol.~8, no.~1, pp. 57--67, 1960.

\bibitem{citeulike:7151841}
S.~C. Schwartz and Y.~S. Yeh, ``{On the distribution function and moments of
  power sums with {L}ognormal component.}'' \emph{The Bell Systems Technical
  Journal}, 1982.

\bibitem{1275712}
N.~Beaulieu and Q.~Xie, ``An optimal {L}ognormal approximation to {L}ognormal
  sum distributions,'' \emph{IEEE Transactions on Vehicular Technology},
  vol.~53, no.~2, pp. 479--489, {M}ar. 2004.

\bibitem{1369233}
N.~Beaulieu and F.~Rajwani, ``Highly accurate simple closed-form approximations
  to {L}ognormal sum distributions and densities,'' \emph{IEEE Communications
  Letters}, vol.~8, no.~12, pp. 709--711, {D}ec. 2004.

\bibitem{1665128}
J.~Filho and M.~Yacoub, ``Simple precise approximations to {W}eibull sums,''
  \emph{IEEE Communications Letters}, vol.~10, no.~8, pp. 614--616, {A}ug.
  2006.

\bibitem{1388722}
J.~Hu and N.~Beaulieu, ``Accurate simple closed-form approximations to
  {R}ayleigh sum distributions and densities,'' \emph{IEEE Communications
  Letters}, vol.~9, no.~2, pp. 109--111, {F}eb. 2005.

\bibitem{alouini1}
F.~Yilmaz and M.-S. Alouini, ``Sum of {W}eibull variates and performance of
  diversity systems,'' in \emph{International Wireless Communications and
  Mobile Computing Conference (IWCMC'2009)}, Leipzig, Germany, {J}un. 2009.

\bibitem{asmussen2000}
S.~Asmussen, K.~Binswanger, and B.~Hojgaard, ``Rare events simulation for
  heavy-tailed distributions,'' \emph{Bernoulli}, vol.~6, no.~2, pp. 303--322,
  {A}pr. 2000.

\bibitem{TLR}
D.~Kroese and R.~Rubinstein, ``\BIBforeignlanguage{English}{The transform
  likelihood ratio method for rare event simulation with heavy tails},''
  \emph{\BIBforeignlanguage{English}{Queueing Systems}}, vol.~46, no. 3-4, pp.
  317--351, {M}ar. 2004.

\bibitem{Juneja:2002:SHT:566392.566394}
S.~Juneja and P.~Shahabuddin, ``Simulating heavy tailed processes using delayed
  hazard rate twisting,'' \emph{ACM Trans. Model. Comput. Simul.}, vol.~12,
  no.~2, pp. 94--118, {A}pr. 2002.

\bibitem{1371507}
Z.~Huang and P.~Shahabuddin, ``A unified approach for finite-dimensional,
  rare-event {M}onte {C}arlo simulation,'' in \emph{IEEE Winter Simulation
  Conference (WSC'2004)}, vol.~2, Washington DC, US, {D}ec. 2004, pp.
  1616--1624 vol.2.

\bibitem{1261434}
------, ``Rare-event, heavy-tailed simulations using hazard function
  transformations, with applications to value-at-risk,'' in \emph{IEEE Winter
  Simulation Conference (WSC'2003)}, New Orelans, LA, US, {D}ec. 2003, pp.
  276--284.

\bibitem{devroye:1986}
L.~Devroye, \emph{{Non-Uniform Random Variate Generation}}.\hskip 1em plus
  0.5em minus 0.4em\relax New York: Springer-Verlag, 1986.

\end{thebibliography}
\bibliographystyle{IEEEtran}
\end{document}